\newtheorem{theorem}[subsection]{Theorem}
\newtheorem{lemma}[subsection]{Lemma}
\newtheorem{proposition}[subsection]{Proposition}
\newtheorem{corollary}[subsection]{Corollary}
\theoremstyle{definition}
\theoremstyle{remark}
\newtheorem{remark}[subsection]{Remark}
\newtheorem{example}[subsection]{Example}
\newcommand{\Id}{\operatorname{Id}}
\newcommand{\SL}{\operatorname{SL}}
\newcommand{\GL}{\operatorname{GL}}
\newcommand{\gl}{\operatorname{gl}}
\title{{\bf Strict versions of various matrix hierarchies\\
related to ${\rm {\bf \SL_{n}}}$-loops and their combinations}}
\author{G.F. Helminck,\\
Korteweg-de Vries Institute\\
University of Amsterdam\\
P.O. Box 94248\\
1090 GE Amsterdam\\
The Netherlands\\
e-mail: g.f.helminck@uva.nl}
\subjclass{22E65, 34A34, 35F99, 35Q53, 37K10, 37K30, 58B25}
\keywords{Matrix hierarchies, strict versions, combined hierarchy, compatible Lax equations, zero curvature form, linearization, oscillating matrices, wave matrices, loop groups and algebras.}
\begin{document}
\maketitle

\begin{abstract}
Let $\mathfrak{t}$ be a commutative Lie subalgebra of ${\rm sl}_{n}(\mathbb{C})$ of maximal dimension. We consider in this paper three spaces of $\mathfrak{t}$-loops that each get deformed in a different way. We require that the deformed generators of each of them evolve w.r.t. the commuting flows they generate according to a certain, different set of Lax equations. This leads to three integrable hierarchies: the $({\rm sl}_{n}(\mathbb{C}), \mathfrak{t})$-hierarchy, its strict version and the combined $({\rm sl}_{n}(\mathbb{C}), \mathfrak{t})$-hierarchy. For $n=2$ and $\mathfrak{t}$ the diagonal matrices, the $({\rm sl}_{2}(\mathbb{C}), \mathfrak{t})$-hierarchy is the AKNS-hierarchy. We treat their interrelations and show that all three have a zero curvature form. Furthermore, we discuss their linearization and we conclude by giving the construction of a large class of solutions.
\end{abstract}

\section{\boldmath Introduction} 
\label{intro}

A simple but fundamental observation in the development of physics and mathematics today is
that breakthroughs in quantum field theory and string theory are usually characterized by the
presence of integrable hierarchies. We illustrate this with several examples.

Two-dimensional topological field theory was solved, see \cite{Witten91}, by Witten in the early 90's and it inspired him to many conjectures,
relating generating functions of intersection numbers of Morita-Mumford classes, matrix models and classical integrable systems of Khadomtsev-Petviashvilii type, see \cite{Witten2}. For the Korteweg de Vries equation the first step of this conjecture was proved by Kontsevich in \cite{Kontsevich94}, and the second by Kharchev, Marshakov, Mironov, Morozov and Zabrodin, \cite{KMMMZ}. A different approach to these conjectures was initiated by work of 
Givental, who defined, see \cite{Giv2004}, a group action on the space of Gromov-Witten potentials and proved it to be transitive on the space of semi-simple potentials. Moreover, Givental gave in  \cite{Giv2003} a link between $A_{r-1}$ singularities and the $r$-KdV hierarchy, $r \geqslant 2$, and showed that his construction is compatible with Witten's conjecture on the relation between the intersection theory on the space of $r$-spin structures on stable curves and the $r$-KdV hierarchy. Using this and their work on tautological relations, Faber, Shadrin and Zvonkine proved in \cite{F-Shadrin-Z2010} that 
 Witten's $r$-spin conjecture is true.

The second example is formed by Seiberg and Witten's complete
solution of four-dimensional $N=2$ supersymmetric Yang-Mills theory. 
Integrability in Seiberg-Witten theory was discovered by I.Krichever, A.Gorsky, A.Marshakov, A.Mironov and A.Morozov, see \cite{KGMMM},
where it was shown that the effective low energy
partition functions is a $\tau$-function of the Whitham hierarchy related to
Hitchin type integrable systems. Also, the extended version of Seiberg-Witten theory, see \cite{A-N2007}, has a link with an integrable hierarchy.
A detailed treatment of integrability in Seiberg-Witten theory can be found in \cite{Marshakov99}.

The third example is the
AdS/CFT correspondence formulated by Maldacena in 1997, stating that a string theory on
Anti-de Sitter space is equivalent to a conformal field theory on its boundary. It has become clear
, by the work of several authors, see e.g. \cite{A-F-R-T2003}, \cite{A-F2009} and \cite{KMMZ}, that on both sides of this
correspondence integrable systems, like spin chains and sigma models, play an important role.
These integrable systems may be used to check the correspondence.

Dealing with supersymmetric gauge theories, one is naturally led to consider $N=1$ theories.
The extremal values of the superpotential in these theories turn out to be described by a
prepotential of some Seiberg-Witten theory, see \cite{CIV}.
Therefore, the whole machinery of Seiberg-Witten theory is applicable in this theory too. 
Moreover, Dijkgraaf and Vafa
have associated in a series of papers, \cite{DV0206}, \cite{DV0207} and \cite{DV0208}, the corresponding prepotential with the logarithm of the partition function of the
Hermitian one-matrix model in the leading order of the size $N$ of the matrix. This leading order
of the matrix model is described by the Whitham hierarchy.

Another exciting common meeting ground for physicists and mathematicians is that of mirror symmetry. This duality developed in the mid-80's out of the observation that a string propagating on a circle with radius $R$ is physically equivalent to a string propagating on a circle with radius $\frac{1}{R}$. In order to be mathematically consistent, one has to require in string theory that some extra dimensions are to be added to spacetime, but on a different scale. 
Thus one arrives at a so-called compactification of the theory. The form that these additional dimensions should have became clear after the 1985 paper, \cite{CHSW85}, by Candelas, Horowitz, Strominger, and Witten. They showed that by compactifying string theory on a Calabi-Yau manifold, one obtains a theory roughly similar to the standard model of particle physics that also consistently incorporates the idea of supersymmetry. In \cite{D88} and \cite{Lerche-V-W89} it was shown that one could not reconstruct back from the compactification, the Calabi-Yau manifold that was used to model the extra dimensions. In trying to do so, it led to two options: type IIA and type IIB and this pair determines the duality. In 1990 Witten introduced in \cite{Witten90} a simplified version of string theory: topological string theory and mirror symmetry survived in this new theory, see \cite{Witten92} and \cite{V92}.
Mirror symmetry attracted the interest of mathematicians around 1990 when Candelas, de la Ossa, Green, and Parkes
showed in \cite{COGP91} that it could be used as a tool in enumerative geometry.
The first striking relation with an integrable hierarchy came, when Givental proved in \cite{Giv-B95} that Toda hierarchies lay at the foundation of Gromov-Witten invariants of projective space. They are the key element in his proof of mirror symmetry for these spaces, see \cite{Giv96}.
By now it has become clear that topological strings on Calabi-Yau geometries, see \cite{A-Dijkgraaf-K-M-V06}, form a unifying picture, where topics as non-critical (super)strings, mirror symmetry, integrable hierarchies and various matrix models connect.

In \cite{F-Takebe90}, Fukuma and Takebe showed that the Toda lattice hierarchy is relevant for the description of deformations of conformal theories, while the KP hierarchy describes unperturbed conformal theories. 
One of the things, they proved was that the $N$-reduced system of the Toda lattice hierarchy
corresponds to the coset conformal model constructed from the affine Lie algebra $\hat{{\rm sl}}_{n}$ as investigated by Eguchi and Yang in \cite{EY89}.

The above examples illustrate sufficiently that integrable hierarchies play a crucial role in various parts of theoretical physics and that it is important to have a profound insight 
into their structure. In the present paper we work like Fukuma and Takebe with ${\rm sl}_{n}$-loops and we consider specific deformations of three Lie subalgebras of $\mathfrak{t}$-loops, with $\mathfrak{t}$ a commutative complex Lie subalgebra of ${\rm sl}_{n}(\mathbb{C})$. The deformed generators of each space of $\mathfrak{t}$-loops should satisfy a certain set of Lax equations and this defines three integrable hierarchies.
We treat the algebraic and geometric properties of these hierarchies, their interrelations and we construct a wide class of solutions. Since there is in our considerations no need, like in \cite{DS}, \cite{R-ST79} and \cite{R-ST81}, to choose $\mathfrak{t}$ to be a Cartan subalgebra, we drop this condition and obtain thus a wider class of deformations.

Recall that integrable hierarchies often occur as the evolution equations of the generators of a deformation of a commutative Lie subalgebra $\mathfrak{c}$ of some Lie algebra $\mathfrak{g}$. Both the deformation and the evolution equations are determined by a splitting of $\mathfrak{g}$
in the direct sum of two Lie subalgebras, like in the Adler-Kostant-Symes Theorem, see \cite{A-vM-VH}. This gives then rise to a compatible set of Lax equations, a so-called hierarchy and the simplest nontrivial equation often determines the name of the hierarchy.

In our situation we take for $\mathfrak{c}$ three spaces of $\mathfrak{t}$-loops. Note that one can just as well choose the commutative Lie subalgebra $\mathfrak{t}$ of ${\rm sl}_{n}(\mathbb{C})$
to be of maximal dimension to include as much commuting flows as possible and this will be done from now on. 
Two concrete examples one can think of, are the diagonal matrices $\mathfrak{t}_{d}$ in ${\rm sl}_{n}(\mathbb{C})$ or its unipotent counterpart
\begin{equation*}
\mathfrak{t}_{u}=\left\{h=\sum_{i=1}^{n-1} a_{i} B^{i} \text{ with } \text{ all }a_{i} \in \mathbb{C} \text{ and }B=\left(\begin{matrix}
0&1&0&\hdots &0\\
0& \ddots &\ddots & \ddots&\vdots\\
\vdots&\ddots &\ddots & \ddots& 0\\
\vdots&\ddots &\ddots & \ddots& 1\\
0& \hdots &\hdots &0 &0
\end{matrix}\right)
\right\}.
\end{equation*}
Let the $\{ E_{\alpha} \mid 1 \leqslant \alpha  \leqslant r \}$ be a basis of $\mathfrak{t}$ and let $z$ be the loop parameter.
Our first choice for $\mathfrak{c}$ is the space $C_{\geqslant 0}$
of all polynomial loops with values in $\mathfrak{t}$ with the basis $\{ E_{\alpha}z^{i} \mid i \geqslant 0, 1 \leqslant \alpha  \leqslant r \}.$ We are interested in certain deformations of the $\{ E_{\alpha} \}$ where the evolution equations of these perturbed loops are determined by the projection of ${\rm sl_{n}}$-loops onto their part containing only positive powers of $z$. This leads to the $({\rm sl}_{n}(\mathbb{C}), \mathfrak{t})$-hierarchy. 

\begin{example}
\label{E1.1}
For $n=2$ and $\mathfrak{t}=\mathfrak{t}_{d}$, the $({\rm sl}_{2}(\mathbb{C}), \mathfrak{t}_{d})$-hierarchy is the AKNS-hierarchy, introduced in \cite{FNR}. They are the evolution equations of a deformation of the matrix  $\left(
\begin{smallmatrix}
-i& 0\\
0&i
\end{smallmatrix}
\right)$ and 
the simplest nontrivial equations in this hierarchy are the AKNS-equations. Recall that this is the 
system of differential equations for two complex functions $q$ and $r$, depending of the variables $x$ and $t$:
\begin{align}
\label{akns1}
&i\dfrac{\partial}{\partial t}q(x,t) :=  iq_t =-\dfrac{1}{2} q_{xx} + q^2r,\\ \notag
&i\dfrac{\partial}{\partial t}r(x,t) :=ir_t  =\dfrac{1}{2}r_{xx}-qr^2.
\end{align}
Ablowitz, Kaup, Newell and Segur showed in \cite{AKNS} that the initial value problem of (\ref{akns1}) could be solved with the Inverse
Scattering Transform, which explains the abbreviation.
\end{example}

The second choice for $\mathfrak{c}$ is the Lie subalgebra $C_{>0}$ of $C_{\geqslant 0}$ spanned by the elements $\{ E_{\alpha}z^{i} \mid i>0, 1 \leqslant \alpha  \leqslant r \}$. In this case we consider more general deformations of the generators $\{ E_{\alpha}z \mid 1 \leqslant \alpha  \leqslant r \}$ and their evolution equations involve now the projection of ${\rm sl_{n}}$-loops onto their part containing only strict positive powers of $z$. This brings you to the 
so-called {\it strict} $({\rm sl}_{n}(\mathbb{C}), \mathfrak{t})$-hierarchy. We follow here the terminology used in similar situations, see \cite{HHP} and \cite{H2016}. 

A detailed description of both integrable hierarchies and their properties can be found in the next section. 
In the third section we introduce, inspired by \cite{HHO11}, certain deformations of the Lie algebra 
$$
C=\{ \sum_{i \in \mathbb{Z}}\sum_{1 \leqslant \alpha  \leqslant r} t_{i \alpha} E_{\alpha}z^{i} \mid t_{i \alpha} \in \mathbb{C} \text{ and } t_{i \alpha} \neq 0 \text{ for a finite number of }i\}
$$ 
and a set of evolution equations they have to satisfy. It will be shown that this system can be seen as a merging of the $({\rm sl}_{n}(\mathbb{C}), \mathfrak{t})$-hierarchy and its strict version. Therefore we call it the {\it combined $({\rm sl}_{n}(\mathbb{C}), \mathfrak{t})$-hierarchy}.  Also this combined system turns out to be compatible. 
The subsequent section is devoted to the description of the linearization of the combined 
hierarchy, which is useful for the construction of its solutions. 
We conclude with giving a geometric construction, starting from a space of ${\rm sl_{n}}$-loops, of solutions of the combined $({\rm sl}_{n}(\mathbb{C}), \mathfrak{t})$-hierarchy.

\section{The $({\rm sl}_{n}(\mathbb{C}), \mathfrak{t})$-hierarchy and its strict version}
\label{Sversions}

We present here an algebraic description of the $({\rm sl}_{n}(\mathbb{C}), \mathfrak{t})$-hierarchy and its strict version that underlines the deformation character of these hierarchies as pointed out in the introduction. Recall that in the case of the AKNS-hierarchy, one worked with ${\rm sl}_{2}$-loops, where the coefficients of the powers of the loop parameter $z$ are ${\rm sl}_{2}$-matrices depending of various parameters. Here we discuss such ${\rm sl}_{n}$-loops.
We formalize this algebraically as follows: let $R$ be a commutative complex algebra that should be seen as the source from which the coefficients of the  $n \times n$ -matrices are taken. We will work in the Lie algebra ${\rm sl}_{n}(R)[z, z^{-1})$ consisting of all elements 
\begin{equation}
\label{lsl2}
X=\sum_{i=-\infty}^{N} X_{i}z^{i}, \text{ with all }X_{i} \in {\rm sl}_{n}(R)
\end{equation}
 and the bracket 
 $$
 [X,Y]=[\sum_{i=-\infty}^{N} X_{i}z^{i} , \sum_{j=-\infty}^{M} Y_{j}z^{j} ]:=\sum_{i=-\infty}^{N} \sum_{j=-\infty}^{M} [X_{i}, Y_{j}] z^{i+j}.
 $$
We will also make use of the slightly more general Lie algebra ${\rm gl}_{n}(R)[z, z^{-1})$, where the coefficients in the $z$-series from (\ref{lsl2}) are taken from ${\rm gl}_{n}(R)$ instead of ${\rm sl}_{n}(R)$ and the bracket is given by the same formula. In the Lie algebra ${\rm gl}_{n}(R)[z, z^{-1})$ we decompose elements in two ways. The first is as follows:
\begin{equation}
\label{eltdeco}
X=\sum_{i=-\infty}^{N} X_{i}z^{i}=\sum_{i=0}^{N} X_{i}z^{i} +\sum_{i=-\infty}^{-1} X_{i}z^{i}=:\pi_{\geqslant 0}(X)+\pi_{<0}(X)
\end{equation}
and this induces the splitting
\begin{equation}
\label{AKNSdeco}
{\rm gl}_{n}(R)[z, z^{-1})=\pi_{\geqslant 0}({\rm gl}_{n}(R)[z, z^{-1})) \oplus \pi_{<0}({\rm gl}_{n}(R)[z, z^{-1})), 
\end{equation}
where the two Lie subalgebras $\pi_{\geqslant 0}({\rm gl}_{n}(R)[z, z^{-1}))$ and $\pi_{<0}({\rm gl}_{n}(R)[z, z^{-1}))$ are given by
\begin{align*}
&\pi_{\geqslant 0}({\rm gl}_{n}(R)[z, z^{-1}))=\{ X \in {\rm gl}_{n}(R)[z, z^{-1}) \mid X=\pi_{\geqslant 0}(X)
\} \text{ and }\\
&\pi_{<0}({\rm gl}_{n}(R)[z, z^{-1}))=\{ X \in {\rm gl}_{n}(R)[z, z^{-1}) \mid X=\pi_{<0}(X)
\}.
\end{align*}
By restriction it leads to a similar decomposition for ${\rm sl}_{n}(R)[z, z^{-1})$, which is relevant for the $({\rm sl}_{n}(\mathbb{C}), \mathfrak{t})$-hierarchy. 
The second way to decompose elements of 
${\rm gl}_{n}(R)[z, z^{-1})$ is: 
\begin{equation}
\label{eltdeco2}
X=\sum_{i=-\infty}^{N} X_{i}z^{i}=\sum_{i=1}^{N} X_{i}z^{i} +\sum_{i=-\infty}^{0} X_{i}z^{i}=:\pi_{>0}(X)+\pi_{\leqslant 0}(X).
\end{equation}
This yields the splitting
\begin{equation}
\label{SAKNSdeco}
{\rm gl}_{n}(R)[z, z^{-1})=\pi_{>0}({\rm gl}_{n}(R)[z, z^{-1}))\oplus \pi_{\leqslant 0}({\rm gl}_{n}(R)[z, z^{-1})). 
\end{equation}
By restricting it to ${\rm sl}_{n}(R)[z, z^{-1})$, we get a similar decomposition for this Lie algebra, which relates, as we will see further on, to the strict version of the $({\rm sl}_{n}(\mathbb{C}), \mathfrak{t})$-hierarchy. The two Lie subalgebras $\pi_{>0}({\rm gl}_{n}(R)[z, z^{-1}))$ and $\pi_{\leqslant 0}({\rm gl}_{n}(R)[z, z^{-1}))$ in (\ref{SAKNSdeco}) are defined in a similar way as the first two Lie subalgebras
\begin{align*}
&\pi_{>0}({\rm gl}_{n}(R)[z, z^{-1}))=\{ X \in {\rm gl}_{n}(R)[z, z^{-1}) \mid X=\pi_{> 0}(X) 
\} \text{ and }\\
&\pi_{\leqslant 0}({\rm gl}_{n}(R)[z, z^{-1}))=\{ X \in {\rm gl}_{n}(R)[z, z^{-1}) \mid X=\pi_{\leqslant 0}(X)
\}.
\end{align*}

Next we describe the form of the deformations of the Lie algebras $C_{\geqslant 0}$ resp. $C_{> 0}$. We start with those of $C_{\geqslant 0}$. Note that 
$C_{\geqslant 0}=\pi_{\geqslant 0}(C)$ is a commutative algebra in the first component of the decomposition (\ref{AKNSdeco}) and our interest is in perturbations of $C_{\geqslant 0}$ obtained by conjugation with elements from a group connected to $\pi_{< 0}({\rm gl}_{n}(R)[z, z^{-1}))$, the complement of $\pi_{\geqslant 0}({\rm gl}_{n}(R)[z, z^{-1}))$ in (\ref{AKNSdeco}).
Note now that for each $X \in \pi_{<0}({\rm gl}_{n}(R)[z, z^{-1}))$ the exponential map yields a well-defined element of the form
\begin{equation}
\label{exp}
\exp(X)=\sum_{k=0}^{\infty} \frac{1}{k!}X^{k}=\Id +Y, Y \in \pi_{< 0}({\rm gl}_{n}(R)[z, z^{-1})),
\end{equation}
and with the formula for the logarithm one retrieves $X$ back from $Y$. One verifies directly that the elements of the form (\ref{exp}) form a group w.r.t. multiplication and this
we see as the group $G_{<0}$ corresponding to $\pi_{<0}({\rm gl}_{n}(R)[z, z^{-1}))$. Clearly, each Lie algebra $gC_{\geqslant 0}g^{-1}, g \in G_{<0},$ is commutative.

Now we discuss the shape of the deformations of $C_{>0}$. Note that 
$C_{> 0}=\pi_{> 0}(C)$ is a commutative algebra in the first component of the decomposition (\ref{SAKNSdeco}) and our interest is in perturbations of $C_{> 0}$ obtained by conjugation with elements from a group linked to $\pi_{\leqslant 0}({\rm gl}_{n}(R)[z, z^{-1}))$, the complement of $\pi_{> 0}({\rm gl}_{n}(R)[z, z^{-1}))$ in (\ref{SAKNSdeco}).
In the case of the Lie subalgebra $\pi_{\leqslant 0}({\rm gl}_{n}(R)[z, z^{-1}))$, one cannot move back and forth between the Lie algebra and its group. Nevertheless, one can assign a proper group to this Lie algebra. A priori, the exponential $\exp(Y)$ of an element $Y \in \pi_{\leqslant 0}({\rm gl}_{n}(R)[z, z^{-1}))$, does not have to define an element in $\pi_{\leqslant 0}({\rm gl}_{n}(R)[z, z^{-1}))$. That requires convergence conditions. However, if it does, then it belongs to  
$$
G_{\leqslant 0}=\{ K=\sum_{j=0}^{\infty}K_{j}z^{-j} \mid \text{ all } K_{j}\in {\rm gl}_{n}(R), K_{0} \in {\rm gl}_{n}(R)^{*} \},
$$
where ${\rm gl}_{n}(R)^{*}$ denotes the elements in ${\rm gl}_{n}(R)$ that have a multiplicative inverse in ${\rm gl}_{n}(R)$. It is a direct verification that $G_{\leqslant 0}$ is a group and we see it as a proper group corresponding to the Lie algebra $\pi_{\leqslant 0}({\rm gl}_{n}(R)[z, z^{-1}))$. In fact, as a group $G_{\leqslant 0}$ is isomorphic to the semi-direct 
product of $G_{<0}$ and ${\rm gl}_{n}(R)^{*}$. Note that also every Lie algebra $gC_{> 0}g^{-1}, g \in G_{\leqslant 0},$ is commutative.

Despite of the fact that the deformations of $C_{\geqslant 0}$ and $C_{> 0}$ described above, involve  ${\rm gl}_{n}(R)$-loops, the deformed Lie algebras 
remain in ${\rm sl}_{n}(R)[z, z^{-1})$.
For there holds
\begin{lemma}
\label{L2.1}
The group $G_{\leqslant 0}$ acts by conjugation on ${\rm sl}_{n}(R)[z, z^{-1})$.
\end{lemma}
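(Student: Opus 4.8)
The statement to prove is that conjugation by elements of $G_{\leqslant 0}$ preserves ${\rm sl}_{n}(R)[z, z^{-1})$; the genuinely nontrivial point is that the conjugate, which a priori only lies in ${\rm gl}_{n}(R)[z, z^{-1})$, is still traceless. The plan is to check this by a direct computation, the remaining group‑action axioms being formal. First I would make sure the inverse is available inside $G_{\leqslant 0}$: given $K=\sum_{i\geqslant 0}K_{i}z^{-i}$ with $K_{0}\in{\rm gl}_{n}(R)^{*}$, write $K=K_{0}(\Id+N)$ with $N=\sum_{j\geqslant 1}K_{0}^{-1}K_{j}z^{-j}\in\pi_{<0}({\rm gl}_{n}(R)[z, z^{-1}))$; then $(\Id+N)^{-1}=\sum_{k\geqslant 0}(-N)^{k}$ is a well‑defined element of $G_{<0}$, since the coefficient of $z^{-m}$ receives contributions only from the finitely many terms $N^{k}$ with $k\leqslant m$, and hence $K^{-1}=(\Id+N)^{-1}K_{0}^{-1}\in G_{\leqslant 0}$ (this is also immediate from the description of $G_{\leqslant 0}$ as a semidirect product of $G_{<0}$ and ${\rm gl}_{n}(R)^{*}$ recorded above).

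Next I would verify that, for $X=\sum_{p\leqslant N}X_{p}z^{p}\in{\rm gl}_{n}(R)[z, z^{-1})$ and $K\in G_{\leqslant 0}$ with $K^{-1}=\sum_{j\geqslant 0}L_{j}z^{-j}$, the formal product $KXK^{-1}$ is a bona fide element of ${\rm gl}_{n}(R)[z, z^{-1})$. Its coefficient of $z^{m}$ is the sum of the terms $K_{i}X_{p}L_{j}$ over all $i,j\geqslant 0$ and $p\leqslant N$ with $p-i-j=m$; for fixed $m$ this forces $i+j=p-m\leqslant N-m$, so only finitely many triples occur, the coefficient is a genuine element of ${\rm gl}_{n}(R)$, and it vanishes whenever $m>N$. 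Thus $KXK^{-1}\in{\rm gl}_{n}(R)[z, z^{-1})$ with top degree at most $N$, and the same finiteness remark shows that multiplication in ${\rm gl}_{n}(R)[z, z^{-1})$ is associative, so that $(K_{1}K_{2})X(K_{1}K_{2})^{-1}=K_{1}(K_{2}XK_{2}^{-1})K_{1}^{-1}$ and $\Id$ acts trivially; conjugation is therefore a group action on ${\rm gl}_{n}(R)[z, z^{-1})$.

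Finally I would handle the tracelessness. Extend the trace coefficientwise to a $\mathbb{C}$‑linear map $\Tr\colon{\rm gl}_{n}(R)[z, z^{-1})\to R[z, z^{-1})$ by $\Tr(\sum Y_{i}z^{i})=\sum\Tr(Y_{i})z^{i}$, so that ${\rm sl}_{n}(R)[z, z^{-1})=\Ker(\Tr)$. Since each coefficient of $KXK^{-1}$ is the finite sum displayed above, the cyclic identity $\Tr(ABC)=\Tr(CAB)$ for matrices over the commutative ring $R$ may be applied term by term, and because $K^{-1}KX$ has coefficient of $z^{m}$ equal to $\sum L_{j}K_{i}X_{p}$ over exactly the same index set, we get $\Tr(KXK^{-1})=\Tr(K^{-1}KX)=\Tr(X)$. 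Hence if $X\in{\rm sl}_{n}(R)[z, z^{-1})$ then $\Tr(KXK^{-1})=0$, i.e. $KXK^{-1}\in{\rm sl}_{n}(R)[z, z^{-1})$, which combined with the previous paragraph gives the lemma. The only delicate point is the bookkeeping ensuring that every $z$‑degree of the triple product involves only finitely many matrix terms — this is what legitimizes both the associativity used for the action axioms and the term‑by‑term application of the cyclic trace identity.
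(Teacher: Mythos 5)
Your proof is correct, but it follows a different route from the paper. The paper splits the problem according to the semidirect product structure $G_{\leqslant 0}\cong G_{<0}\ltimes{\rm gl}_{n}(R)^{*}$: for $g=\exp(X)\in G_{<0}$ it writes $gYg^{-1}=e^{\operatorname{ad}(X)}(Y)=Y+\sum_{k\geqslant 1}\frac{1}{k!}\operatorname{ad}(X)^{k}(Y)$, so that every $z$-coefficient of the difference with $Y$ is a sum of commutators of ${\rm gl}_{n}(R)$- and ${\rm sl}_{n}(R)$-elements and hence traceless, and it then observes separately that conjugation by a constant invertible matrix preserves ${\rm sl}_{n}(R)$. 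You instead treat a general $K\in G_{\leqslant 0}$ uniformly: you build $K^{-1}$ by the geometric series applied to $K=K_{0}(\Id+N)$, check the degreewise finiteness that makes the triple product $KXK^{-1}$ a well-defined associative product in ${\rm gl}_{n}(R)[z,z^{-1})$, and then kill the trace by the coefficientwise cyclic identity $\Tr(K_{i}X_{p}L_{j})=\Tr(L_{j}K_{i}X_{p})$, giving $\Tr(KXK^{-1})=\Tr(K^{-1}KX)=\Tr(X)=0$. What your argument buys is independence from the exponential parametrization of $G_{<0}$ (which the paper has already set up and therefore uses for free) and a single computation covering all of $G_{\leqslant 0}$ without invoking the semidirect product decomposition; it also makes explicit the bookkeeping (finitely many terms per $z$-degree, associativity) that the paper leaves implicit. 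The paper's argument, in exchange, is shorter and exhibits the conjugate as $Y$ plus iterated commutators, which is structurally informative. Both proofs are complete; your only obligations beyond the paper's are the finiteness and associativity checks, and you have carried those out correctly.
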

\begin{proof}
Take first any $g \in G_{<0}$. Then there is an $X \in \pi_{<0}({\rm gl}_{n}(R)[z, z^{-1}))$ such that $g=\exp(X)$. Since there holds for every $Y \in {\rm sl}_{n}(R)[z, z^{-1})$
$$
gYg^{-1}=\exp(X)Y \exp(-X)=e^{\text{ad}(X)}(Y)=Y+ \sum_{k=1}^{\infty} \frac{1}{k!} \text{ad}(X)^{k}(Y)
$$
and this shows that the coefficients for the different powers of $z$ in this expression are commutators of elements of ${\rm gl}_{n}(R)$ and ${\rm sl}_{n}(R)$ and that proofs the claim for elements from $G_{<0}$. Since conjugation with an element from ${\rm gl}_{n}(R)^{*}$ maps ${\rm sl}_{n}(R)$ to itself, the same holds for ${\rm sl}_{n}(R)[z, z^{-1})$. This proves the full claim.
\end{proof}

Next we have a more detailed look at the different deformations.
Since the elements $z^{m} \Id, m \in \mathbb{Z},$ are central in ${\rm gl}_{n}(R)[z, z^{-1})$, the deformations of $C_{\geqslant 0}$ by elements of $G_{<0}$ are determined by that of the $\{ E_{\alpha} \}$:
$$
gE_{\alpha}z^{m}g^{-1}=gE_{\alpha}g^{-1}z^{m}, g \in G_{<0}, m \geqslant 0.
$$ 
Therefore we focus for $g=\exp(X)=\exp(\sum_{j=1}^{\infty}X_{j}z^{-j}) \in G_{<0}$ on the $\{ gE_{\alpha}g^{-1} \}$. These deformed generators have the form 
\begin{align}
\label{decoU}
\notag
U_{\alpha}&=gE_{\alpha}g^{-1}=\exp(X)E_{\alpha} \exp(-X):=\sum_{i=0}^{\infty}U_{\alpha,i}z^{-i}\\
&=E_{\alpha}
+ [X_{1},E_{\alpha}]z^{-1} +\{ [X_{2},E_{\alpha}] + \frac{1}{2} [X_{1}, [X_{1}, E_{\alpha}]]\}z^{-2} +\cdots 
\end{align}
\begin{remark}
\label{R2.1} In the case of the $({\rm sl}_{2}(\mathbb{C}), \mathfrak{t}_{d})$-hierarchy we only have to deal with the deformation $U_{1}$ of the matrix $E_{1}=\left(
\begin{smallmatrix}
-i& 0\\
0&i
\end{smallmatrix}
\right)$.
From formula (\ref{decoU}) we see directly that, if each $X_{i}=\left(
\begin{matrix}
-\alpha_{i}& \beta_{i}\\
\gamma_{i}&\alpha_{i}
\end{matrix}
\right), i=1,2,$ then $U_{1,0}=E_{1}$,
\begin{equation}
\label{coefQ1}
U_{1,1}:=\left(
\begin{matrix}
0&q \\
r&0
\end{matrix}
\right)=\left(
\begin{matrix}
0& 2i \beta_{1}\\
-2i \gamma_{i}&0
\end{matrix}
\right),
\end{equation}
and
\begin{equation}
\label{}
U_{1,2}:=\left(
\begin{matrix}
u_{11}& u_{12}\\
u_{21}&u_{22}
\end{matrix}
\right)=\left(
\begin{matrix}
-2i \beta_{1} \gamma_{1}& 2i(\beta_{2}-\alpha_{1}\beta_{1}) \\
-2i(\gamma_{2}+\alpha_{1}\gamma_{1})&2i \beta_{1} \gamma_{1}
\end{matrix}
\right).
\end{equation}
In particular we get in this way that $u_{11}=-i\frac{qr}{2}$ and $u_{22}=i\frac{qr}{2}$. We will see at a later stage that the deformation equations imply that also the coefficients $u_{12}$ and $u_{21}$ can expressed in $q$, $u$ and their derivatives.
\end{remark}
By the same argument, the deformations of the Lie algebra $C_{>0}$ by elements from $G_{\leqslant 0}$ are basically determined by those of the elements $\{E_{\alpha}z\}$. So we focus on the deformations of these elements. Using the same notations as at the deformation of $E_{\alpha}$ by $G_{<0}$, we get that the deformation of $E_{\alpha} z$ by a $Kg \in G_{\leqslant 0}$, with $K \in {\rm gl}_{n}(R)^{*}$ and $g \in G_{<0}$, looks like
\begin{align}
\label{sdecoV}
\notag
V_{\alpha}&=KgE_{\alpha}zg^{-1}K^{-1}:=\sum_{i=0}^{\infty}KU_{\alpha,i}K^{-1}z^{1-i}= \sum_{i=0}^{\infty}V_{\alpha, i}z^{1-i}\\
&=
V_{\alpha,0}z+ [KX_{1}K^{-1},V_{\alpha,0}] +
\cdots .
\end{align}
Consequently, the corresponding deformation of each $E_{\alpha} z^{m}, m \geqslant 1,$ is $V_{\alpha}z^{m-1}$.

Having fixed the type of deformation of the Lie algebras $C_{\geqslant 0}$ and $C_{> 0}$, we need one more ingredient to discuss the hierarchies.
In the case of deforming $C_{\geqslant 0}$, all the basis elements 
$\{ E_{\alpha}z^{m}, m \geqslant 0, 1 \leqslant \alpha \leqslant r \}$ generate commuting flows and we want to study deformations of the type (\ref{decoU}) that depend of these flows.
Therefore we assume that the algebra $R$ possesses a set $\{ \partial_{m \alpha} \mid m \geqslant 0, 1 \leqslant \alpha \leqslant r \}$ of commuting $\mathbb{C}$-linear derivations $\partial_{m \alpha}: R \to R$, where each $\partial_{m \alpha}$ should be seen as the derivation corresponding to the flow generated by $E_{\alpha}z^{m}$. The data $(R, \{ \partial_{m \alpha } \mid m \geqslant 0, 1 \leqslant \alpha \leqslant r \})$ is called a {\it setting} for the 
integrable hierarchy related to the perturbation of the Lie algebra $C_{\geqslant 0}$ to be introduced in a moment.
Similarly, at deforming $C_{> 0}$, we require that $R$ has a collection 
of commuting $\mathbb{C}$-linear  derivations $\partial_{m \alpha}: R \to R,$ for the indices $m\geqslant 1, 1 \leqslant \alpha \leqslant r.$  Sticking to the same terminology, we call the data $(R, \{ \partial_{m \alpha} \mid m \geqslant 1 ,1 \leqslant \alpha \leqslant r\})$ also a {\it setting} for the integrable hierarchy corresponding to the deformations of $C_{> 0}$ to be discussed here.
\begin{example}
\label{E2.1}
Examples of settings for the respective hierarchies are the algebras of complex polynomials $\mathbb{C}[t_{m\alpha}]$ in the variables $\{ t_{m \alpha} \mid m \geqslant 0, 1 \leqslant \alpha \leqslant r\}$ resp. $\{ t_{m\alpha} \mid m \geqslant 1, 1 \leqslant \alpha \leqslant r \}$ or the formal power series $\mathbb{C}[[t_{m\alpha}]]$ in the same variables, where both algebras are equipped with the derivations $\partial_{m\alpha}=\frac{\partial}{\partial t_{m\alpha}}$ for the appropriate indices. Depending if one takes $R$ equal to $\mathbb{C}[t_{m\alpha}]$ or $\mathbb{C}[[t_{m\alpha}]]$, one speaks of  the polynomial or formal power series solutions of the hierarchies. 
\end{example}
We let each derivation $\partial_{m\alpha}$, occurring in some setting, act coefficient wise on the matrices from ${\rm gl}_{n}(R)$ and that defines then a derivation of this algebra. The same holds for the extension to ${\rm gl}_{n}(R)[z, z^{-1})$ defined by 
$$
\partial_{m\alpha}(X):=\sum_{j=-\infty}^{N} \partial_{m \alpha}(X_{j})z^{j}.
$$
Now it is time to discuss the nonlinear equations that the deformations $\{ U_{\alpha_{2}} \}$ of type (\ref{decoU})
should satisfy. We want that  
their evolution w.r.t. the $\{ \partial_{m \alpha_{1}} \}$ is given by: for all $m \geqslant 0$ and all $\alpha_{1}, 1 \leqslant \alpha_{1} \leqslant r ,$
\begin{equation}
\label{LaxAKNS}
\partial_{m\alpha_{1}}(U_{\alpha_{2}})=[\pi_{ \geqslant 0}(U_{\alpha_{1}}z^{m}),U_{\alpha_{2}}]=-[\pi_{ < 0}(U_{\alpha_{1}}z^{m}),U_{\alpha_{2}}],
\end{equation}
where the second identity follows from the fact that all $\{ U_{\alpha_{1}}z^{m} \}$ commute.
The equations (\ref{LaxAKNS}) are called the {\it Lax equations of the $({\rm sl}_{n}(\mathbb{C}), \mathfrak{t})$-hierarchy} and the deformation $\{ U_{\alpha_{2}} \}$ satisfying these equations is called a {\it solution} of the hierarchy. Note that the $\{U_{\alpha}=E_{\alpha}\}$ form a solution of the $({\rm sl}_{n}(\mathbb{C}), \mathfrak{t})$-hierarchy and it is called the {\it trivial} one.
Note that the equations (\ref{LaxAKNS}) for $m=0$ are simply $\partial_{0\alpha_{1}}(U_{\alpha_{2}})=[E_{\alpha_{1}}, U_{\alpha_{2}}]$. Therefore, if $\partial_{0\alpha_{1}}=\frac{\partial}{\partial t_{0\alpha_{1}}}$ and the matrix coefficients of both $\exp(\sum_{1 \leqslant \alpha_{1} \leqslant r}t_{0\alpha_{1}}E_{\alpha_{1}})$ and its inverse
belong to the algebra $R$ of matrix coefficients, then we can introduce the deformation $\hat{U}_{\alpha_{2}}$ given by
$$
\hat{U}_{\alpha_{2}}:=\exp(-\sum_{1 \leqslant \alpha_{1} \leqslant r}t_{0\alpha_{1}}E_{\alpha_{1}})U_{\alpha_{2}} \exp(\sum_{1 \leqslant \alpha_{1} \leqslant r}t_{0\alpha_{1}}E_{\alpha_{1}}),
$$
which is easily seen to satisfy $\partial_{0\alpha_{1}}(\hat{U}_{\alpha_{2}})=0$, for all $\alpha_{1}$. This handles then the dependence of $U_{\alpha_{2}}$ of the $\{t_{0\alpha_{1}}\}$.

For the deformations $\{ V_{\beta_{2}} \}$ of the form (\ref{sdecoV}) we require that the evolution w.r.t. the $\{ \partial_{m\beta_{1}}\}$ is coupled to the decomposition (\ref{SAKNSdeco}) and it should satisfy: for all $m \geqslant 1$ and all $\beta_{1}, 1 \leqslant \beta_{1} \leqslant r ,$
\begin{equation}
\label{LaxSAKNS}
\partial_{m\beta_{1}}(V_{\beta_{2}})=[\pi_{ > 0}(V_{\beta_{1}}z^{m-1}),V_{\beta_{2}}]=-[\pi_{ \leqslant 0}(V_{\beta_{1}}z^{m-1}),V_{\beta_{2}}],
\end{equation}
where the second identity follows from the fact that all $\{ V_{\beta_{1}} z^{m-1} \}$ commute.
Since the equations (\ref{LaxSAKNS}) correspond to the strict cut-off (\ref{eltdeco2}), they are called the {\it Lax equations of the strict $({\rm sl}_{n}(\mathbb{C}), \mathfrak{t})$-hierarchy} and any set of deformation $\{ V_{\beta_{2}} \}$ satisfying them, is called a {\it solution} of this hierarchy. Again there is al least one solution: $V_{\beta_{1}}=E_{\beta_{1}}z, $ for all $\beta_{1}, 1 \leqslant \beta_{1} \leqslant r $. It is called the {\it trivial} solution of the hierarchy.
\begin{remark}
\label{R2.2} Assume $\mathfrak{t}_{1}$ and $\mathfrak{t}_{2}$ are conjugated under $\GL_{n}(\mathbb{C})$, i.e. there is a $g \in \GL_{n}(\mathbb{C})$ such that $\mathfrak{t}_{2}=g \mathfrak{t}_{1}g^{-1}$. Then one verifies directly that if the $\{ U_{\alpha} \}$ is a solution of the $({\rm sl}_{n}(\mathbb{C}), \mathfrak{t}_{1})$-hierarchy or the $\{ V_{\beta} \}$ of its strict version, then the  $\{ gU_{\alpha}g^{-1} \}$ solve the Lax equations of the $({\rm sl}_{n}(\mathbb{C}), \mathfrak{t}_{2})$-hierarchy and the $\{ gV_{\beta}g^{-1} \}$ those of the strict $({\rm sl}_{n}(\mathbb{C}), \mathfrak{t}_{2})$-hierarchy and in both cases this forms a bijection between the two sets of solutions. Hence, it suffices to consider the systems for one representative of each conjugacy class. In particular, we may suppose that $\mathfrak{t}$ is upper triangular.
\end{remark}

For both systems (\ref{LaxAKNS}) and (\ref{LaxSAKNS}) one can speak of compatibility. There holds namely
\begin{proposition}
\label{P2.1}
Both sets of Lax equations (\ref{LaxAKNS}) and (\ref{LaxSAKNS}) are so-called compatible systems, i.e. the projections $\{ B_{m\alpha}:=\pi_{\geqslant 0}(U_{\alpha}z^{m}) \mid m \geqslant 0, 1 \leqslant \alpha \leqslant r \}$ satisfy the zero curvature relations
\begin{equation}
\label{ZCAKNS}
\partial_{m_{1}\alpha_{1}}(B_{m_{2}\alpha_{2}})-\partial_{m_{2}\alpha_{2}}(B_{m_{1}\alpha_{1}})-[B_{m_{1}\alpha_{1}},B_{m_{2}\alpha_{2}}]=0
\end{equation}
and the projections $\{ C_{m\beta}:=\pi_{> 0}(V_{\beta}z^{m-1}) \mid m \geqslant 1 , 1 \leqslant \beta \leqslant r \}$ satisfy the zero curvature relations
\begin{equation}
\label{ZCSAKNS}
\partial_{m_{1}\beta_{1}}(C_{m_{2}\beta_{2}})-\partial_{m_{2}\beta_{2}}(C_{m_{1}\beta_{1}})-[C_{m_{1}\beta_{1}},C_{m_{2}\beta_{2}}]=0
\end{equation}
\end{proposition}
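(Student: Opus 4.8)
The plan is to run the classical argument that converts a Lax system attached to a splitting of a Lie algebra into a zero-curvature system, carried out once for the decomposition (\ref{AKNSdeco}) and once for (\ref{SAKNSdeco}). I will describe the reasoning for (\ref{ZCAKNS}) in detail; the proof of (\ref{ZCSAKNS}) is obtained from it by replacing everywhere $\pi_{\geqslant 0}$ by $\pi_{>0}$, $\pi_{<0}$ by $\pi_{\leqslant 0}$, the $U_{\alpha}$ by the $V_{\beta}$, the $B_{m\alpha}$ by the $C_{m\beta}$ and the powers $z^{m}$ by $z^{m-1}$, and not a single step changes. Three facts already established will be used throughout: (i) each derivation $\partial_{m\alpha}$ acts coefficientwise and fixes $z$, hence it commutes with $\pi_{\geqslant 0}$, with $\pi_{<0}$ and with multiplication by powers of $z$; (ii) both $\pi_{\geqslant 0}({\rm gl}_{n}(R)[z,z^{-1}))$ and $\pi_{<0}({\rm gl}_{n}(R)[z,z^{-1}))$ are Lie subalgebras; (iii) the deformed generators commute, $[U_{\alpha_{1}}z^{m_{1}},U_{\alpha_{2}}z^{m_{2}}]=0$, since $gC_{\geqslant 0}g^{-1}$ is commutative for every $g\in G_{<0}$.

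First I would rewrite the Lax equations (\ref{LaxAKNS}) in the homogeneous form $\partial_{m_{1}\alpha_{1}}(U_{\alpha_{2}}z^{m_{2}})=[B_{m_{1}\alpha_{1}},U_{\alpha_{2}}z^{m_{2}}]$, obtained by multiplying (\ref{LaxAKNS}) by $z^{m_{2}}$ and using that $z$ is a constant for $\partial_{m_{1}\alpha_{1}}$. Applying $\pi_{\geqslant 0}$, which by (i) commutes with $\partial_{m_{1}\alpha_{1}}$, and splitting $U_{\alpha_{2}}z^{m_{2}}=B_{m_{2}\alpha_{2}}+\pi_{<0}(U_{\alpha_{2}}z^{m_{2}})$, one uses (ii) to note that $[B_{m_{1}\alpha_{1}},B_{m_{2}\alpha_{2}}]$ already lies in the $\pi_{\geqslant 0}$-part, whence
\[
\partial_{m_{1}\alpha_{1}}(B_{m_{2}\alpha_{2}})=[B_{m_{1}\alpha_{1}},B_{m_{2}\alpha_{2}}]+\pi_{\geqslant 0}\bigl([B_{m_{1}\alpha_{1}},\pi_{<0}(U_{\alpha_{2}}z^{m_{2}})]\bigr).
\]
Forming from this the combination $\partial_{m_{1}\alpha_{1}}(B_{m_{2}\alpha_{2}})-\partial_{m_{2}\alpha_{2}}(B_{m_{1}\alpha_{1}})-[B_{m_{1}\alpha_{1}},B_{m_{2}\alpha_{2}}]$ prescribed by (\ref{ZCAKNS}) — the two antisymmetric $[B,B]$ contributions from the two terms combine with the subtracted one to leave a single $[B_{m_{1}\alpha_{1}},B_{m_{2}\alpha_{2}}]$ — the left-hand side of (\ref{ZCAKNS}) becomes
\[
[B_{m_{1}\alpha_{1}},B_{m_{2}\alpha_{2}}]+\pi_{\geqslant 0}\bigl([B_{m_{1}\alpha_{1}},\pi_{<0}(U_{\alpha_{2}}z^{m_{2}})]\bigr)+\pi_{\geqslant 0}\bigl([\pi_{<0}(U_{\alpha_{1}}z^{m_{1}}),B_{m_{2}\alpha_{2}}]\bigr).
\]
To finish I would recognise this expression as $\pi_{\geqslant 0}$ applied to $[U_{\alpha_{1}}z^{m_{1}},U_{\alpha_{2}}z^{m_{2}}]$: expanding both factors of that bracket into their $\pi_{\geqslant 0}$- and $\pi_{<0}$-parts yields exactly the three displayed terms together with the bracket of the two $\pi_{<0}$-parts, and the latter has vanishing $\pi_{\geqslant 0}$-part by (ii). Hence the displayed expression equals $\pi_{\geqslant 0}(0)=0$ by (iii), which is (\ref{ZCAKNS}).

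This is essentially bookkeeping, so I do not anticipate a genuine conceptual obstacle; the points requiring care are purely formal. One must make sure that the manipulations use only that $\pi_{\geqslant 0}$ (resp. $\pi_{>0}$) is a Lie subalgebra \emph{and} that its complement in the relevant decomposition is one as well — which is exactly why (\ref{AKNSdeco}) and (\ref{SAKNSdeco}) were arranged this way — and one must keep the signs straight when rewriting $[\pi_{<0}(\,\cdot\,),\,\cdot\,]$ as $-[\,\cdot\,,\pi_{<0}(\,\cdot\,)]$ and when collecting the two $[B_{m_{1}\alpha_{1}},B_{m_{2}\alpha_{2}}]$ contributions. For the strict version the only extra wrinkle is that the cut-off is $\pi_{>0}$ and the relevant generators are the $V_{\beta}z^{m-1}$, but $\pi_{\leqslant 0}({\rm gl}_{n}(R)[z,z^{-1}))$ is still a Lie subalgebra and the $V_{\beta}z^{m-1}$ still commute, so the same computation goes through verbatim.
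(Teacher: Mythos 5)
Your proposal is correct and is essentially the paper's own argument: both homogenize the Lax equations to statements about $U_{\alpha_2}z^{m_2}$ (resp. $V_{\beta_2}z^{m_2-1}$), push them through the splitting (\ref{AKNSdeco}) resp. (\ref{SAKNSdeco}), and use that the deformed generators commute and that both summands of the splitting are Lie subalgebras. The only differences are cosmetic: the paper writes out the strict case and concludes by showing the zero-curvature expression lies in both complementary subalgebras and hence vanishes, whereas you write out the non-strict case and conclude by recognizing the expression as $\pi_{\geqslant 0}\bigl([U_{\alpha_1}z^{m_1},U_{\alpha_2}z^{m_2}]\bigr)=\pi_{\geqslant 0}(0)=0$.
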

\begin{proof}
The idea is to show that the left hand side of (\ref{ZCAKNS}) resp. (\ref{ZCSAKNS}) belongs to 
$$
{\rm sl}_{n}(R)[z, z^{-1})_{\geqslant 0} \cap {\rm sl}_{n}(R)[z, z^{-1})_{<0} \text{ resp. }
{\rm sl}_{n}(R)[z, z^{-1})_{> 0} \cap {\rm sl}_{n}(R)[z, z^{-1})_{\leqslant 0}
$$ 
and thus has to be zero. We give the proof for the $\{ C_{m\beta} \}$, that for the $\{ B_{m\alpha} \}$ is similar and is left to the reader. The inclusion in the first factor is clear as both $C_{m\beta}$ and $\partial_{n\gamma}(C_{m\beta})$ belong to the Lie subalgebra ${\rm sl}_{n}(R)[z, z^{-1})_{ > 0}.$ 
To show the other one, we use the Lax equations (\ref{LaxSAKNS}). 
Note that the same Lax equations hold for all the $\{ z^{N}V_{\beta} \mid N \geqslant 0 \}$
\begin{equation*}
\partial_{m\beta_{1}}(z^{N}V_{\beta})=[\pi_{ > 0}(V_{\beta_{1}}z^{m-1}),z^{N}V_{\beta}].
\end{equation*}
By substituting $C_{m_{i}\beta_{i}}=z^{m_{i}-1}V_{\beta_{i}}-\pi_{<0}(z^{m_{i}-1}V_{\beta_{i}})$ we get for
\begin{align*}
\partial_{m_{1}\beta_{1}}(C_{m_{2}\beta_{2}})-\partial_{m_{2}\beta_{2}}(C_{m_{1}\beta_{1}})=&\;\partial_{m_{1}\beta_{1}}(z^{m_{2}-1}V_{\beta_{2}})-\partial_{m_{1}\beta_{1}}(\pi_{\leqslant 0}(z^{m_{2}-1}V_{\beta_{2}}))\\
 & -\partial_{m_{2}\beta_{2}}(z^{m_{1}-1}V_{\beta_{1}})+\partial_{m_{2}\beta_{2}}(\pi_{\leqslant 0}(z^{m_{1}-1}V_{\beta_{1}}))\\
=&\; [C_{m_{1}\beta_{1}},z^{m_{2}-1}V_{\beta_{2}}]-[C_{m_{2}\beta_{2}},z^{m_{1}-1}V_{\beta_{1}}]\\
& -\partial_{m_{1}\beta_{1}}(\pi_{\leqslant 0}(z^{m_{2}-1}V_{\beta_{2}}))+\partial_{m_{2}\beta_{2}}(\pi_{\leqslant 0}(z^{m_{1}-1}V_{\beta_{1}}))
\end{align*} 
and for 
\begin{align*}
[C_{m_{1}\beta_{1}},C_{m_{2}\beta_{2}}]=&\;[z^{m_{1}-1}V_{\beta_{1}}-\pi_{\leqslant 0}(z^{m_{1}-1}V_{\beta_{1}}), z^{m_{2}-1}V_{\beta_{2}}-\pi_{\leqslant 0}(z^{m_{2}-1}V_{\beta_{2}})]\\
=&\;-[\pi_{\leqslant 0}(z^{m_{1}-1}V_{\beta_{1}}), z^{m_{2}-1}V_{\beta_{2}}]+[\pi_{\leqslant 0}(z^{m_{2}-1}V_{\beta_{2}}), z^{m_{1}-1}V_{\beta_{1}}]\\
&\;+[\pi_{\leqslant 0}(z^{m_{1}-1}V_{\beta_{1}}),\pi_{\leqslant 0}(z^{m_{2}-1}V_{\beta_{2}})].
\end{align*}
Taking into account the second identity in (\ref{LaxSAKNS}), we see that the left hand side of (\ref{ZCSAKNS}) is equal to 
$$
-\partial_{m_{1}\beta_{1}}(\pi_{\leqslant 0}(z^{m_{2}-1}V_{\beta_{2}}))+\partial_{m_{2}\beta_{2}}(\pi_{\leqslant 0}(z^{m_{1}-1}V_{\beta_{1}}))-[\pi_{\leqslant 0}(z^{m_{1}-1}V_{\beta_{1}}),\pi_{\leqslant 0}(z^{m_{2}-1}V_{\beta_{2}})].
$$
This element belongs to the Lie subalgebra $\pi_{\leqslant 0}({\rm sl}_{n}(R)[z, z^{-1}))$ and that proves the claim.
\end{proof}
Reversely, we have
\begin{proposition} 
\label{P2.2}
Suppose we have a deformation $\{U_{\alpha}\}$ of the type (\ref{decoU}) and a deformation $\{V_{\beta}\}$ of the form (\ref{sdecoV}). Then there holds:
\begin{enumerate}
\item Assume that the projections $\{ B_{m \alpha}:=\pi_{\geqslant 0}(U_{\alpha}z^{m}) \mid m \geqslant 0, 1 \leqslant \alpha \leqslant r \}$ satisfy the zero curvature relations (\ref{ZCAKNS}). Then the set $\{U_{\alpha}\}$ is a solution of the $({\rm sl}_{n}(\mathbb{C}), \mathfrak{t})$-hierarchy.
\item Similarly, if the projections $\{ C_{m\beta}:=\pi_{> 0}(V_{\beta}z^{m-1}) \mid m \geqslant 1,1 \leqslant \beta \leqslant r  \}$ satisfy the zero curvature relations (\ref{ZCSAKNS}), then the set $\{V_{\beta}\}$ is a solution of the strict $({\rm sl}_{n}(\mathbb{C}), \mathfrak{t})$-hierarchy.
\end{enumerate}
\end{proposition}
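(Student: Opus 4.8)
The plan is to carry out (1) in detail and then indicate the completely parallel argument for (2), with the decomposition (\ref{SAKNSdeco}), the group $G_{\leqslant 0}$ and the range $m\geqslant 1$ playing the roles of (\ref{AKNSdeco}), $G_{<0}$ and $m\geqslant 0$.

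The first step is to rephrase the Lax equations as a commutation statement. A deformation of type (\ref{decoU}) is given by a single $g\in G_{<0}$ with $U_{\alpha}=gE_{\alpha}g^{-1}$, so differentiating gives $\partial_{m\alpha_{1}}(U_{\alpha_{2}})=[a_{m\alpha_{1}},U_{\alpha_{2}}]$ with $a_{m\alpha_{1}}:=\partial_{m\alpha_{1}}(g)g^{-1}$, and one checks at once that $a_{m\alpha_{1}}\in\pi_{<0}({\rm gl}_{n}(R)[z,z^{-1}))$. Writing $B_{m\alpha_{1}}=U_{\alpha_{1}}z^{m}-\pi_{<0}(U_{\alpha_{1}}z^{m})$ and using that all the $\{U_{\alpha}z^{m}\}$ commute (being conjugates of the commuting $E_{\alpha}z^{m}$), one computes
\[
\partial_{m\alpha_{1}}(U_{\alpha_{2}})-[B_{m\alpha_{1}},U_{\alpha_{2}}]=[T_{m\alpha_{1}},U_{\alpha_{2}}],
\]
where $T_{m\alpha_{1}}:=a_{m\alpha_{1}}+\pi_{<0}(U_{\alpha_{1}}z^{m})$ again lies in $\pi_{<0}({\rm gl}_{n}(R)[z,z^{-1}))$. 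Hence it suffices to prove $[T_{m\alpha_{1}},U_{\alpha_{2}}]=0$ for all $m\geqslant 0$ and all $\alpha_{1},\alpha_{2}$.

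The heart of the matter is to extract from the zero curvature relations (\ref{ZCAKNS}) the identity, valid for all $m_{1},\alpha_{1},m_{2},\alpha_{2}$,
\begin{align*}
[T_{m_{1}\alpha_{1}},U_{\alpha_{2}}]\,z^{m_{2}}-[T_{m_{2}\alpha_{2}},U_{\alpha_{1}}]\,z^{m_{1}}
={}&\partial_{m_{1}\alpha_{1}}\big(\pi_{<0}(U_{\alpha_{2}}z^{m_{2}})\big)-\partial_{m_{2}\alpha_{2}}\big(\pi_{<0}(U_{\alpha_{1}}z^{m_{1}})\big)\\
&+\big[\pi_{<0}(U_{\alpha_{1}}z^{m_{1}}),\pi_{<0}(U_{\alpha_{2}}z^{m_{2}})\big].
\end{align*}
I would obtain this exactly as in the proof of Proposition \ref{P2.1}: substitute $B_{m_{i}\alpha_{i}}=U_{\alpha_{i}}z^{m_{i}}-\pi_{<0}(U_{\alpha_{i}}z^{m_{i}})$ into (\ref{ZCAKNS}), rewrite each $\partial_{m_{i}\alpha_{i}}$ applied to a full loop $U_{\alpha_{j}}z^{m_{j}}$ as $[a_{m_{i}\alpha_{i}},\,\cdot\,]$, regroup the $a$- and $U$-terms into the $T$'s, and cancel $[U_{\alpha_{1}}z^{m_{1}},U_{\alpha_{2}}z^{m_{2}}]=0$. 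Pushing this bookkeeping through correctly is the main (though entirely mechanical) obstacle. Its payoff is that the right hand side manifestly belongs to $\pi_{<0}({\rm gl}_{n}(R)[z,z^{-1}))$, because $\pi_{<0}$ of any element has only strictly negative powers of $z$ and the derivations $\partial_{m\alpha}$ preserve the $z$-grading.

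The last step reads off the vanishing coefficient by coefficient. Fix $m_{2},\alpha_{1},\alpha_{2}$. Since $T_{m_{1}\alpha_{1}}\in\pi_{<0}$ and $U_{\alpha_{2}}=E_{\alpha_{2}}+\pi_{<0}(U_{\alpha_{2}})$, the loop $[T_{m_{1}\alpha_{1}},U_{\alpha_{2}}]$ involves only powers $z^{i}$ with $i\leqslant -1$, so $[T_{m_{1}\alpha_{1}},U_{\alpha_{2}}]z^{m_{2}}$ contributes nothing in degree $\geqslant m_{2}$. Equating the coefficients of $z^{m_{2}}$ on the two sides of the identity above — the right hand side being in $\pi_{<0}$ and $m_{2}\geqslant 0$ — shows that the coefficient of $z^{m_{2}-m_{1}}$ in $[T_{m_{2}\alpha_{2}},U_{\alpha_{1}}]$ vanishes; letting $m_{1}\geqslant 0$ vary forces every coefficient of $[T_{m_{2}\alpha_{2}},U_{\alpha_{1}}]$ to vanish, i.e. $[T_{m_{2}\alpha_{2}},U_{\alpha_{1}}]=0$, which is (\ref{LaxAKNS}); this proves (1). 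For (2) one writes $V_{\beta}=hE_{\beta}zh^{-1}$ with $h\in G_{\leqslant 0}$ and sets $T_{m\beta}:=\partial_{m\beta}(h)h^{-1}+\pi_{\leqslant 0}(V_{\beta}z^{m-1})\in\pi_{\leqslant 0}({\rm gl}_{n}(R)[z,z^{-1}))$, so that the defect of (\ref{LaxSAKNS}) equals $[T_{m\beta_{1}},V_{\beta_{2}}]$; from (\ref{ZCSAKNS}) one derives in the same fashion that $[T_{m_{1}\beta_{1}},V_{\beta_{2}}]z^{m_{2}-1}-[T_{m_{2}\beta_{2}},V_{\beta_{1}}]z^{m_{1}-1}$ lies in $\pi_{\leqslant 0}$, and since $T_{m\beta}$ has degrees $\leqslant 0$ while $V_{\beta}$ has degrees $\leqslant 1$, comparing the coefficients of $z^{m_{2}+1}$ and letting $m_{1}$ run through $\{1,2,\dots\}$ gives $[T_{m_{2}\beta_{2}},V_{\beta_{1}}]=0$, which is (\ref{LaxSAKNS}).
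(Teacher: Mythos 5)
Your proposal is correct, and it takes a genuinely different route from the paper's. You use the hypothesis that the deformation is of type (\ref{decoU}), i.e. $U_{\alpha}=gE_{\alpha}g^{-1}$ for a single $g\in G_{<0}$, to write $\partial_{m\alpha_{1}}(U_{\alpha_{2}})=[\partial_{m\alpha_{1}}(g)g^{-1},U_{\alpha_{2}}]$ and hence the Lax defect as $[T_{m\alpha_{1}},U_{\alpha_{2}}]$ with $T_{m\alpha_{1}}\in\pi_{<0}({\rm gl}_{n}(R)[z,z^{-1}))$; the zero curvature relations then give your identity (essentially the computation already carried out in the proof of Proposition \ref{P2.1}), whose right-hand side lies in $\pi_{<0}$, and letting $m_{1}$ run over all admissible values kills every coefficient of $[T_{m_{2}\alpha_{2}},U_{\alpha_{1}}]$ — your degree bookkeeping in both cases (coefficients of $z^{m_{2}}$, resp. $z^{m_{2}+1}$) checks out. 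The paper instead argues by contradiction and never invokes $g$: it assumes some defect is nonzero, multiplies it by $z^{N}$ so that its top degree grows without bound, and uses the zero curvature relation with indices $(N,\beta_{2})$ and $(m,\beta_{1})$ to bound the degree of the same expression by $m$, a contradiction. Both arguments rest on the same mechanism — the full family of zero curvature relations with one index tending to infinity, plus degree counting in the splitting (\ref{AKNSdeco}) resp. (\ref{SAKNSdeco}) — but yours is direct and produces an explicit identity, at the price of using the conjugation form and the commutativity of the $\{U_{\alpha}z^{m}\}$ (both available by hypothesis), whereas the paper's contradiction argument needs only the degree bounds on the deformations.
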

 \begin{proof}
 Again we prove the statement for the set $\{V_{\beta}\}$, that for the $\{U_{\alpha}\}$ is shown in a similar way.
 So, assume that there is one Lax equation (\ref{LaxSAKNS}) that does not hold. Then there is a $m \geqslant 1$, a $\beta_{1}, 1 \leqslant \beta_{1} \leqslant r$ and a $\beta_{2}, 1 \leqslant \beta_{2} \leqslant r$ such that 
$$
\partial_{m\beta_{1}}(V_{\beta_{2}})-[C_{m\beta_{1}},V_{\beta_{2}}]=\sum_{j \leqslant k(m,\beta_{i})} X_{j}z^{j}, \text{ with }X_{k(m,\beta_{i})} \neq 0.
$$
Since both $\partial_{m \beta_{1}}(V_{\beta_{2}})$ and $[C_{m \beta},V_{\beta_{2}}]$ are of order smaller than or equal to one in $z$, we know that $k(m,\beta_{i}) \leqslant 0$. Further, we can say that for all 
$N\geqslant 0 $
$$
\partial_{m\beta_{1}}(z^{N}V_{\beta_{2}})-[C_{m\beta_{1}},z^{N}V_{\beta_{2}}]=\sum_{j \leqslant k(m,\beta_{i})} X_{j}z^{j+N}, \text{ with }X_{k(m,\beta_{i})} \neq 0
$$
and we see by letting $N$ go to infinity that the right hand side can obtain any sufficiently large order in $z$. By the zero curvature relation for the indices $N\beta_{2}$ and $m\beta_{1}$ we get for the left hand side
\begin{align*}
\partial_{m\beta_{1}}(z^{N}V_{\beta_{2}})-[C_{m\beta_{1}},z^{N}V_{\beta_{2}}]&=\partial_{m}(C_{N\beta_{2}}) -[C_{m\beta_{1}}, C_{N\beta_{2}}]+\partial_{m\beta_{1}}(\pi_{\leqslant 0}(z^{N}V_{\beta_{2}}))\\
&\; \;-[C_{m\beta_{1}}, \pi_{\leqslant 0}(z^{N}V_{\beta_{2}})]\\
&=\partial_{N}(C_{m\beta_{1}})+ \partial_{m\beta_{1}}(\pi_{ \leqslant 0}(z^{N}V_{\beta_{2}})) \\
&\; \;-[C_{m\beta_{1}}, \pi_{ \leqslant 0}(z^{N}V_{\beta_{2}})]
\end{align*}
and this last expression is of order smaller or equal to $m$ in $z$. This contradicts the unlimited growth in orders of $z$ of the right hand side. Hence all Lax equations (\ref{LaxSAKNS}) have to hold for $V_{\beta_{2}}$.
\end{proof}
Because of the equivalence between the Lax equations (\ref{LaxAKNS}) for the $\{U_{\alpha}\}$ and the zero curvature relations (\ref{ZCAKNS}) for the $\{ B_{m\alpha} \}$, we call this last set of equations also the {\it zero curvature form} of the $({\rm sl}_{n}(\mathbb{C}), \mathfrak{t})$-hierarchy. Similarly, the zero curvature relations (\ref{ZCSAKNS}) for the $\{ C_{m\beta} \}$ is called the {\it zero curvature form} of the strict $({\rm sl}_{n}(\mathbb{C}), \mathfrak{t})$-hierarchy.

\begin{remark}
\label{R2.3} 
We come back to example (\ref{R2.1}). Assume that the deformation $U_{1}$ of $E_{1}$ discussed there is a solution of the $({\rm sl}_{2}(\mathbb{C}), \mathfrak{t}_{d})$-hierarchy. We have seen that the dependence of the parameter $t_{01}$ is straightforward. Therefore the first serious relation occurs for $m_{1}=2, m_{2}=1$ and $\alpha_{1}=\alpha_{2}=1$. Consider the relation (\ref{ZCAKNS}) for these indices
$$
\partial_{21}(E_{1}z +U_{1,1})=\partial_{11}(E_{1}z^{2} +U_{1,1}z +U_{1,2})+[U_{1,2}, E_{1}z +U_{1,1}].
$$
Since $E_{1}$ is constant, this identity reduces in ${\rm sl}_{2}(R)[z, z^{-1})_{ \geqslant 0}$ to the following two equalities
\begin{align}
\label{ZC12}
\partial_{11}(U_{1,1})=[E_{1}, U_{1,2}] \text{ and }\partial_{21}(U_{1,1}) =\partial_{11}(U_{1,2})+[U_{1,2},U_{1,1}].
\end{align}
The first gives an expression of the off-diagonal terms of $U_{1,2}$ in the coefficients $q$ and $r$ of $U_{1,1}$, i.e.
$$
u_{12}=\frac{i}{2}\partial_{11}(q) \text{ and }q_{21}=-\frac{i}{2}\partial_{11}(r).
$$ 
Then the second equation becomes a system of equations solely in the coefficients $q$, $r$ and their derivatives w.r.t. $\partial_{11}$ and $\partial_{21}$. A direct computation shows that it amounts to the AKNS-equations (\ref{akns1}), if one has $\partial_{11}=\frac{\partial}{\partial x}$ and $\partial_{21}=\frac{\partial}{\partial t}$.
\end{remark}

Besides the zero curvature relations for the cut-off's $\{ B_{m\alpha}\}$ resp. $\{ C_{m\beta}\}$ corresponding to respectively a solution $\{U_{\alpha}\}$ of the $({\rm sl}_{n}(\mathbb{C}), \mathfrak{t})$-hierarchy and a solution $\{V_{\beta}\}$ of the strict $({\rm sl}_{n}(\mathbb{C}), \mathfrak{t})$-hierarchy, also other parts satisfy such relations. Define for all $\alpha, 1 \leqslant \alpha \leqslant r,$ and all $\beta, 1 \leqslant \beta \leqslant r,$
$$
A_{m\alpha}:=B_{m\alpha}-U_{\alpha}z^{m}, m \geqslant 0, \text{ and }D_{m\beta}:=C_{m\beta}-V_{\beta}z^{m-1}, m\geqslant 1.
$$
Then we can say

\begin{corollary}The following relations hold:
\label{C2.1}
\begin{itemize}
\item The parts $\{ A_{m\alpha} \mid m \geqslant 0, 1 \leqslant \alpha \leqslant r\}$ of a solution $\{U_{\alpha}\}$ of the $({\rm sl}_{n}(\mathbb{C}), \mathfrak{t})$-hierarchy satisfy 
$$
\partial_{m_{1}\alpha_{1}}(A_{m_{2}\alpha_{2}})-\partial_{m_{2}\alpha_{2}}(A_{m_{1}\alpha_{1}})-[A_{m_{1}\alpha_{1}},A_{m_{2}\alpha_{2}}]=0.
$$
\item
The parts $\{ D_{m\beta} \mid m \geqslant 1,  1 \leqslant \beta \leqslant r\}$ of a solution $\{V_{\beta}\}$ of the strict $({\rm sl}_{n}(\mathbb{C}), \mathfrak{t})$-hierarchy satisfy 
$$
\partial_{m_{1}\beta_{1}}(D_{m_{2}\beta_{2}})-\partial_{m_{2}\beta_{2}}(D_{m_{1}\beta_{1}})-[D_{m_{1}\beta_{1}},D_{m_{2}\beta_{2}}]=0
$$
\end{itemize}
\end{corollary}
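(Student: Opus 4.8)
The plan is to derive both identities directly from Proposition~\ref{P2.1} by a substitution, using only the Lax equations and the commutativity of the deformed generators. The starting point is the observation that, by the very definitions of $B_{m\alpha}$ and $C_{m\beta}$,
\[
A_{m\alpha}=\pi_{\geqslant 0}(U_{\alpha}z^{m})-U_{\alpha}z^{m}=-\pi_{<0}(U_{\alpha}z^{m}),\qquad
D_{m\beta}=\pi_{> 0}(V_{\beta}z^{m-1})-V_{\beta}z^{m-1}=-\pi_{\leqslant 0}(V_{\beta}z^{m-1}),
\]
so in particular $B_{m\alpha}=A_{m\alpha}+U_{\alpha}z^{m}$ and $C_{m\beta}=D_{m\beta}+V_{\beta}z^{m-1}$.

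For the first item I would substitute $B_{m_{i}\alpha_{i}}=A_{m_{i}\alpha_{i}}+U_{\alpha_{i}}z^{m_{i}}$ into the zero curvature relation (\ref{ZCAKNS}). Since $\partial_{m\alpha}$ acts coefficient wise and commutes with multiplication by $z^{m_{2}}$, the Lax equation (\ref{LaxAKNS}) gives
\[
\partial_{m_{1}\alpha_{1}}(U_{\alpha_{2}}z^{m_{2}})=[\pi_{\geqslant 0}(U_{\alpha_{1}}z^{m_{1}}),U_{\alpha_{2}}]z^{m_{2}}=[B_{m_{1}\alpha_{1}},U_{\alpha_{2}}z^{m_{2}}]=[A_{m_{1}\alpha_{1}},U_{\alpha_{2}}z^{m_{2}}],
\]
the last equality using that all the $\{U_{\alpha}z^{m}\}$ commute, and symmetrically for $\partial_{m_{2}\alpha_{2}}(U_{\alpha_{1}}z^{m_{1}})$. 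Expanding $[B_{m_{1}\alpha_{1}},B_{m_{2}\alpha_{2}}]$ into its four pieces, the piece $[U_{\alpha_{1}}z^{m_{1}},U_{\alpha_{2}}z^{m_{2}}]$ vanishes, and the two mixed pieces $[A_{m_{1}\alpha_{1}},U_{\alpha_{2}}z^{m_{2}}]$ and $[U_{\alpha_{1}}z^{m_{1}},A_{m_{2}\alpha_{2}}]$ cancel exactly against the two commutator terms just produced from the derivatives of $U_{\alpha_{2}}z^{m_{2}}$ and $U_{\alpha_{1}}z^{m_{1}}$. What survives is precisely
\[
\partial_{m_{1}\alpha_{1}}(A_{m_{2}\alpha_{2}})-\partial_{m_{2}\alpha_{2}}(A_{m_{1}\alpha_{1}})-[A_{m_{1}\alpha_{1}},A_{m_{2}\alpha_{2}}]=0.
\]

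The second item is obtained verbatim with $B$, $U$, $z^{m}$, $\pi_{\geqslant 0}$ and (\ref{LaxAKNS}), (\ref{ZCAKNS}) replaced by $C$, $V$, $z^{m-1}$, $\pi_{>0}$ and (\ref{LaxSAKNS}), (\ref{ZCSAKNS}), using that the $\{V_{\beta}z^{m-1}\}$ commute. There is no genuine obstacle here: the argument is a finite rearrangement, and its only ingredients are Proposition~\ref{P2.1}, the Lax equations, and the commutativity of the deformed generators. The single point requiring care is the bookkeeping of signs and of which commutators cancel; writing $A_{m\alpha}=-\pi_{<0}(U_{\alpha}z^{m})$ and $D_{m\beta}=-\pi_{\leqslant 0}(V_{\beta}z^{m-1})$ also makes transparent that the left-hand sides a priori lie in the relevant ``negative'' Lie subalgebra, which serves as a useful consistency check.
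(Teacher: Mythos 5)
Your proposal is correct and follows essentially the same route as the paper: substitute $B_{m\alpha}=A_{m\alpha}+U_{\alpha}z^{m}$ (resp. $C_{m\beta}=D_{m\beta}+V_{\beta}z^{m-1}$) into the zero curvature relations of Proposition \ref{P2.1}, invoke the Lax equations in the form $\partial_{m_{1}\alpha_{1}}(U_{\alpha_{2}}z^{m_{2}})=[A_{m_{1}\alpha_{1}},U_{\alpha_{2}}z^{m_{2}}]$ (resp. with $D$ and $V$), and use the commutativity of the deformed generators to cancel the mixed terms. The only cosmetic difference is that you write out the cancellation for the $\{A_{m\alpha}\}$ and transfer it verbatim to the strict case, whereas the paper does the strict case and leaves the other to the reader.
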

\begin{proof}
Again we show the result only in the strict case. Recall that the $\{V_{\beta_{2}}z^{m_{2}-1}\}$ satisfy Lax equations similar to the $\{V_{\beta_{2}}\}$
$$
\partial_{m_{1}\beta_{1}}(V_{\beta_{2}}z^{m_{2}-1})=[D_{m_{1}\beta_{1}}, V_{\beta_{2}}z^{m_{2}-1}], i \geqslant 1.
$$
Now we substitute in the zero curvature relations for the $\{ C_{m\beta}\}$ everywhere the relation $C_{m\beta}=D_{m\beta}+V_{\beta}z^{m-1}$ and use the above Lax equations and the fact that all the $\{V_{\beta}z^{m-1}\}$ commute. This yields the desired result.
\end{proof}

\section{The combined $({\rm sl}_{n}(\mathbb{C}), \mathfrak{t})$-hierarchy}
\label{SAKNS+AKNS}

The commutative Lie subalgebra $C$, where the combined hierarchy is based upon, is the complex algebra with basis $\{ E_{\alpha}z^{m}\mid m \in \mathbb{Z}, 1 \leqslant \alpha \leqslant r\}$. It is a Lie subalgebra of both ${\rm sl}_{n}(R)[z, z^{-1})$ and ${\rm sl}_{n}(R)[ z^{-1}, z)$, where this last Lie algebra consists of loops that have at most a pole around zero:
$$
{\rm sl}_{n}(R)[ z^{-1}, z)=\{ \sum_{i=-N}^{\infty} X_{i}z^{i} \mid  \text{ all }X_{i} \in {\rm sl}_{n}(R)\}.
$$
The algebra $C$ can be split into $C=C_{\geqslant 0} \oplus C_{<0}$, where $C_{\geqslant 0}$ is spanned by the $\{ E_{\alpha}z^{m}\mid m \geqslant 0, 1 \leqslant \alpha \leqslant r\}$ and $C_{<0}$ by the 
$\{ E_{\alpha}z^{m}\mid m <0\}$. Now we are interested in deforming both $C_{\geqslant 0}$ and $C_{>0}$, the first inside ${\rm sl}_{n}(R)[z, z^{-1})$ and the second inside ${\rm sl}_{n}(R)[ z^{-1}, z)$. This may cause that the deformations of $C_{\geqslant 0}$ and $C_{>0}$ no longer commute. Since the powers of $z$ are central, it is enough to consider the deformations of the elements $\{E_{\alpha} \mid  1 \leqslant \alpha \leqslant r \}$ and the $\{E_{\alpha}z^{-1} \mid  1 \leqslant \alpha \leqslant r\}$. We deform the elements $\{E_{\alpha} \}$ as in the $({\rm sl}_{n}(\mathbb{C}), \mathfrak{t})$-case with an element of the group 
$$
G_{<0}=\{ \Id +Y_{<0} \mid Y_{<0} \in {\rm gl}_{n}(R)[z, z^{-1})_{<0} \}
$$
and that leads to a collection of deformations $U_{\alpha}=U_{\alpha}(z)=\sum_{j \geqslant 0}^{\infty} U_{\alpha j}z^{-j}$ as in (\ref{decoU}). The elements $\{E_{\alpha}z^{-1}\}$, on the contrary, we deform with an element from the group 
$$
G_{\geqslant 0}=\{ X=X_{0} + X_{\geqslant 1} \mid X_{0} \in {\rm gl}_{n}(R)^{*}, X_{\geqslant 1} \in {\rm gl}_{n}(R)[z^{-1}, z)_{> 0} \}
$$
to elements 
\begin{equation}
\label{defS}
W_{\alpha}=W_{\alpha}(z):=XE_{\alpha}z^{-1}X^{-1}=\sum_{j=0}^{\infty}S_{j}z^{j-1} \in {\rm sl}_{n}(R)[ z^{-1}, z).
\end{equation}
If one makes in $W_{\alpha}(z)$ the substitution $z \to z^{-1}$, then one gets deformations $V_{\alpha}(z)=W_{\alpha}(\dfrac{1}{z})$ as considered in (\ref{sdecoV}) at the strict $({\rm sl}_{n}(\mathbb{C}), \mathfrak{t})$-hierarchy. So, after deforming the basis of $C$, we are left with two sets $\{ U_{\alpha}z^{m}\mid m \geqslant 0 , 1 \leqslant \alpha \leqslant r\}$ and $\{W_{\alpha}z^{m+1}\mid m<0, 1 \leqslant \alpha \leqslant r \}$ that each span a commutative Lie algebra, but do not have to commute among each other.

Next we discuss the Lax equations that the deformation $(\{U_{\alpha}\}, \{W_{\beta}\})$ of the basis of $C$ should satisfy. Thereto we assume that the algebra $R$ possesses a collection $\{ \partial_{m \alpha} \mid m \in \mathbb{Z}, 1 \leqslant \alpha \leqslant r\}$ of commuting $\mathbb{C}$-linear derivations $\partial_{m \alpha}: R \to R$, where each $\partial_{m \alpha}$ should be seen as an algebraic substitute for the derivation corresponding to the flow generated by each element $E_{\alpha}z^{m}$
in the basis of $C$. For $X \in {\rm gl}_{n}(R)[z, z^{-1})$ or $X \in {\rm gl}_{n}(R)[z^{-1},z)$ we define 
the action of each $\partial_{m \alpha}$ by
$$
\partial_{m\alpha}(X):=\sum_{j} \partial_{m\alpha}(X_{j})z^{j},
$$
where the action on ${\rm gl}_{n}(R)$ is defined coefficient wise. This defines a derivation of both algebras. Following the terminology used in Section 2
, we call the data $(R, \{ \partial_{m\alpha} \mid m \in \mathbb{Z},1 \leqslant \alpha \leqslant r \})$ a {\it setting} for the {\it combined $({\rm sl}_{n}(\mathbb{C}), \mathfrak{t})$-hierarchy}.
\begin{example}
\label{E3.1}
Examples of settings are for the moment the algebras of complex polynomials $\mathbb{C}[t_{m \alpha}]$ in the variables $\{ t_{m \alpha} \mid m \in \mathbb{Z}\}$ 
or the formal power series $\mathbb{C}[[t_{m \alpha}]]$ in the same variables; both algebras equipped with the derivations $\partial_{m\alpha}=\frac{\partial}{\partial t_{m\alpha}}, m \in \mathbb{Z}, 1 \leqslant \alpha \leqslant r$. Later, at the construction of the solutions, more sophisticated choices for $R$ will occur.
\end{example}

Now we want that the sets of deformations $(\{U_{\alpha}\}, \{W_{\beta}\})$ satisfy the following evolution equations:
\begin{align}
\label{Lax1combAKNS}
&\partial_{m\alpha_{1}}(U_{\alpha_{2}})=[\pi_{ \geqslant 0}(U_{\alpha_{1}}z^{m}),U_{\alpha_{2}}] \text{ and } \partial_{m\alpha_{1}}(W_{\beta})=[\pi_{ \geqslant 0}(U_{\alpha_{1}}z^{m}),W_{\beta}], \\ 
\notag 
&\text{ for all $m \geqslant 0$ and all $\{\alpha_{i}\}, \beta \in [1,r] \cap \mathbb{Z}$;}\\
\label{Lax2combAKNS}
&\partial_{m\beta_{1}}(W_{\beta_{2}})=[\pi_{ < 0}(W_{\beta_{1}}z^{m+1}),W_{\beta_{2}}] \text{ and } \partial_{m\beta_{1}}(U_{\alpha})=[\pi_{ < 0}(W_{\beta_{1}}z^{m+1}),U_{\alpha}],\\ \notag
&\text{ for all $m < 0$ and all $\{\beta_{j}\}, \alpha \in [1,r] \cap \mathbb{Z}$.} 
\end{align}
 Note that the first set of equations in (\ref{Lax1combAKNS}) imply that the $\{U_{\alpha} \}$ satisfy the Lax equations of the $({\rm sl}_{n}(\mathbb{C}), \mathfrak{t})$-hierarchy w.r.t. the $\{ \partial_{m\alpha} \mid m \geqslant 0, 1 \leqslant \alpha \leqslant r\}$ and the first set of equations in (\ref{Lax2combAKNS}) imply that, when you translate the $\{W_{\beta}\}$ back to ${\rm gl}_{n}(R)[z, z^{-1})$ by $V_{\beta}(z)=W_{\beta}(\frac{1}{z})$, then these $\{V_{\beta}\}$ solve
 the strict $({\rm sl}_{n}(\mathbb{C}), \mathfrak{t})$-hierarchy w.r.t. the $\{ \partial_{m\beta} \mid m< 0,1 \leqslant \beta \leqslant r\}$.
Therefore we call the equations (\ref{Lax1combAKNS}) and (\ref{Lax2combAKNS}) the {\it Lax equations of the combined $({\rm sl}_{n}(\mathbb{C}), \mathfrak{t})$-hierarchy} and the deformation $(\{U_{\alpha}\}, \{W_{\beta}\})$ satisfying these equations a {\it solution of the combined $({\rm sl}_{n}(\mathbb{C}), \mathfrak{t})$-hierarchy}. Note that the trivial perturbation $(\{E_{\alpha}\}, \{ E_{\beta}z^{-1}\})$ solves this system as in the unperturbed situation all elements of the basis of $C$ commute and moreover, are constants for all the derivations $\{ \partial_{m\alpha}\}$. We refer to it as the {\it trivial} solution.

Also the system of Lax equations (\ref{Lax1combAKNS}) and (\ref{Lax2combAKNS}) is compatible, for there holds
\begin{proposition}
\label{P3.1}
Let the deformation $(\{U_{\alpha}\}, \{W_{\beta}\})$ be a solution of the combined $({\rm sl}_{n}(\mathbb{C}), \mathfrak{t})$-hierarchy and consider the projections $\{ B_{m\alpha}:=\pi_{\geqslant 0}(U_{\alpha}z^{m}) \}$ 
resp. $\{ C_{m\beta}:=\pi_{< 0}(W_{\beta}z^{m+1}) \}$ 
that occur in the Lax equations of the combined hierarchy.
Now these projections satisfy the following zero curvature relations: for all $\{ \alpha_{i}\}$ and $\{\beta_{j}\}$ in $[1,r] \cap \mathbb{Z}$
\begin{align}
\label{ZC1AKNS}
&\partial_{m_{1}\beta_{1}}(B_{m_{2}\alpha_{2}})-\partial_{m_{2}\alpha_{2}}(C_{m_{1}\beta_{1}})-[C_{m_{1}\beta_{1}},B_{m_{2}\alpha_{2}}]=0 \text{ if $m_{1} <0$, $m_{2} \geqslant 0$,}\\
\label{ZC2AKNS}
&\partial_{m_{1}\alpha_{1}}(B_{m_{2}\alpha_{2}})-\partial_{m_{2}\alpha_{2}}(B_{m_{1}\alpha_{1}})-[B_{m_{1}\alpha_{1}},B_{m_{2}\alpha_{2}}]=0 \text{ if $m_{1} \geqslant 0$, $m_{2} \geqslant 0$,}\\
\label{ZC3AKNS}
&\partial_{m_{1}\beta_{1}}(C_{m_{2}\beta_{2}})-\partial_{m_{2}\beta_{2}}(C_{m_{1}\beta_{1}})-[C_{m_{1}\beta_{1}},C_{m_{2}\beta_{2}}]=0 \text{ if $m_{1} <0$, $m_{2} < 0$.}
\end{align}
\end{proposition}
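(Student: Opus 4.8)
The plan is to dispose of (\ref{ZC2AKNS}) and (\ref{ZC3AKNS}) by reduction to Proposition \ref{P2.1}, and then to treat the \emph{mixed} relation (\ref{ZC1AKNS}) by imitating the two-sided argument used there. For (\ref{ZC2AKNS}): the first set of equations in (\ref{Lax1combAKNS}) says exactly that $\{U_{\alpha}\}$ satisfies the Lax equations (\ref{LaxAKNS}) of the $({\rm sl}_{n}(\mathbb{C}),\mathfrak{t})$-hierarchy with respect to $\{\partial_{m\alpha}\mid m\geqslant 0\}$, so (\ref{ZC2AKNS}) is literally (\ref{ZCAKNS}) and follows from Proposition \ref{P2.1}. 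For (\ref{ZC3AKNS}): substituting $z\mapsto 1/z$ turns the first set of equations in (\ref{Lax2combAKNS}) into the strict Lax system (\ref{LaxSAKNS}) for $V_{\beta}(z)=W_{\beta}(1/z)$ with respect to $\{\partial_{m\beta}\mid m<0\}$ (after the evident reindexing), so (\ref{ZC3AKNS}) is the image of (\ref{ZCSAKNS}) under this substitution; equivalently, the proof of Proposition \ref{P2.1} can be repeated verbatim inside ${\rm sl}_{n}(R)[z^{-1},z)$. Thus only (\ref{ZC1AKNS}) requires new work.

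For (\ref{ZC1AKNS}), fix $m_{1}<0$ and $m_{2}\geqslant 0$ and write $\widetilde{U}:=U_{\alpha_{2}}z^{m_{2}}$ and $\widetilde{W}:=W_{\beta_{1}}z^{m_{1}+1}$. Then $B_{m_{2}\alpha_{2}}=\pi_{\geqslant 0}(\widetilde{U})$ is a polynomial in $z$ of degree $\leqslant m_{2}$, $A:=\pi_{<0}(\widetilde{U})$ contains only powers $z^{\leqslant -1}$, $C_{m_{1}\beta_{1}}=\pi_{<0}(\widetilde{W})$ is a Laurent polynomial with powers $z^{m_{1}},\dots ,z^{-1}$, and $E:=\pi_{\geqslant 0}(\widetilde{W})$ contains only powers $z^{\geqslant 0}$. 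The left-hand side $L$ of (\ref{ZC1AKNS}) is then a Laurent polynomial, with powers between $z^{m_{1}}$ and $z^{m_{2}}$, so it lies in both ${\rm sl}_{n}(R)[z,z^{-1})$ and ${\rm sl}_{n}(R)[z^{-1},z)$ and the decomposition $L=\pi_{\geqslant 0}(L)+\pi_{<0}(L)$ applies to it; it therefore suffices to show $\pi_{\geqslant 0}(L)=0$ and $\pi_{<0}(L)=0$.

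The two ingredients are: (i) each $\partial_{m\alpha}$ commutes with $\pi_{\geqslant 0}$ and $\pi_{<0}$, since it acts coefficientwise; and (ii) because $m_{1}<0$ and $m_{2}\geqslant 0$, the Lax equations (\ref{Lax2combAKNS}) and (\ref{Lax1combAKNS}) give $\partial_{m_{1}\beta_{1}}(\widetilde{U})=[C_{m_{1}\beta_{1}},\widetilde{U}]$ and $\partial_{m_{2}\alpha_{2}}(\widetilde{W})=[B_{m_{2}\alpha_{2}},\widetilde{W}]$, the commutators being well defined since $C_{m_{1}\beta_{1}}$ and $B_{m_{2}\alpha_{2}}$ are Laurent polynomials. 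For the nonnegative part: $\partial_{m_{2}\alpha_{2}}(C_{m_{1}\beta_{1}})$ has only powers $z^{\leqslant -1}$ and drops out, while $\partial_{m_{1}\beta_{1}}(B_{m_{2}\alpha_{2}})=\pi_{\geqslant 0}\big([C_{m_{1}\beta_{1}},\widetilde{U}]\big)=\pi_{\geqslant 0}\big([C_{m_{1}\beta_{1}},B_{m_{2}\alpha_{2}}]\big)$ because $[C_{m_{1}\beta_{1}},A]$ has only powers $z^{\leqslant -2}$; hence $\pi_{\geqslant 0}(L)=\pi_{\geqslant 0}\big([C_{m_{1}\beta_{1}},B_{m_{2}\alpha_{2}}]\big)-\pi_{\geqslant 0}\big([C_{m_{1}\beta_{1}},B_{m_{2}\alpha_{2}}]\big)=0$. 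Symmetrically, for the negative part: $\partial_{m_{1}\beta_{1}}(B_{m_{2}\alpha_{2}})$ is a polynomial in $z$ and drops out, $\partial_{m_{2}\alpha_{2}}(C_{m_{1}\beta_{1}})=\pi_{<0}\big([B_{m_{2}\alpha_{2}},\widetilde{W}]\big)=\pi_{<0}\big([B_{m_{2}\alpha_{2}},C_{m_{1}\beta_{1}}]\big)$ because $[B_{m_{2}\alpha_{2}},E]$ has only powers $z^{\geqslant 0}$, and this again cancels $\pi_{<0}\big([C_{m_{1}\beta_{1}},B_{m_{2}\alpha_{2}}]\big)$. Both components vanish, so $L=0$.

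The one place that needs care is the degree/pole-order bookkeeping: one must check that every commutator written down is a finite sum in each power of $z$ — this works because $B_{m_{2}\alpha_{2}}$ and $C_{m_{1}\beta_{1}}$ are Laurent polynomials even though $\widetilde{U}\in{\rm sl}_{n}(R)[z,z^{-1})$ and $\widetilde{W}\in{\rm sl}_{n}(R)[z^{-1},z)$ have infinite tails running in opposite directions — and that the correction terms $[C_{m_{1}\beta_{1}},A]$ and $[B_{m_{2}\alpha_{2}},E]$ land entirely on the expected side of the cut. Once this is in place the cancellations are purely formal, exactly as in Proposition \ref{P2.1}.
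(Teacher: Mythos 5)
Your proposal is correct and follows essentially the same route as the paper: the pure relations are reduced to Proposition \ref{P2.1} (with the $z\mapsto 1/z$ translation for the $W$'s), and the mixed relation is handled by using the cross Lax equations $\partial_{m_{1}\beta_{1}}(U_{\alpha_{2}}z^{m_{2}})=[C_{m_{1}\beta_{1}},U_{\alpha_{2}}z^{m_{2}}]$ and $\partial_{m_{2}\alpha_{2}}(W_{\beta_{1}}z^{m_{1}+1})=[B_{m_{2}\alpha_{2}},W_{\beta_{1}}z^{m_{1}+1}]$ together with the grading. Your phrasing "both projections of the left-hand side vanish" is just a repackaging of the paper's "the left-hand side lies in both $\pi_{\geqslant 0}({\rm sl}_{n}(R)[z,z^{-1}))$ and $\pi_{<0}({\rm sl}_{n}(R)[z,z^{-1}))$", so no substantive difference.
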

\begin{proof}
The relations (\ref{ZC2AKNS}) and (\ref{ZC3AKNS}) follow from Proposition \ref{P2.1}, so we merely have to proof the mixed relation (\ref{ZC1AKNS}).
The main idea of the proof is to show that the left hand side of the equation in (\ref{ZC1AKNS}) belongs to both $\pi_{\geqslant 0}({\rm sl}_{n}(R)[z, z^{-1}))$ and $\pi_{<0}({\rm sl}_{n}(R)[z, z^{-1}))$ and therefore has to be equal to zero.

Since the powers of $z$ are central and the second set of equations in (\ref{Lax2combAKNS}) holds for the $\{U_{\alpha}\}$, we know that we have for all $m_{2}\geqslant 0$ and all $m_{1} <0$
$$
\partial_{m_{1}\beta_{1}}(U_{\alpha_{2}}z^{m_{2}})=[\pi_{ < 0}(W_{\beta_{1}}z^{m_{1}+1}),U_{\alpha_{2}}z^{m_{2}}]=[C_{m_{1}\beta_{1}}, U_{\alpha_{2}}z^{m_{2}}].
$$
Combining this with the substitution $B_{m_{2}\alpha_{2}}=U_{\alpha_{2}}z^{m_{2}}-\pi_{<0}(U_{\alpha_{2}}z^{m_{2}})$ we get that
$$
\partial_{m_{1}\beta_{1}}(B_{m_{2}\alpha_{2}})-[C_{m_{1}\beta_{1}},B_{m_{2}\alpha_{2}}]=-\partial_{m_{1}\beta_{1}}(\pi_{<0}(U_{\alpha_{2}}z^{m_{2}}))+[C_{m_{1}\beta_{1}},\pi_{<0}(U_{\alpha_{2}}z^{m_{2}})]
$$
and the right hand side of this expression clearly belongs to $\pi_{<0}({\rm sl}_{n}(R)[z, z^{-1}))$. Now $\partial_{m_{2}\alpha_{2}}(C_{m_{1}\beta_{1}})$ also belongs to this Lie subalgebra, so we see that the whole left hand side of (\ref{ZC1AKNS}) lies inside $\pi_{<0}({\rm sl}_{n}(R)[z, z^{-1}))$.

To get the other inclusion, we use the second set of Lax equations for the $\{W_{\beta}\}$ in (\ref{Lax1combAKNS}). For the same reason as above, we get then that for all $m_{1} <0$ and 
all $m_{2}\geqslant 0$ there holds
$$
\partial_{m_{2}\alpha_{2}}(W_{\beta_{1}}z^{m_{1}+1})=[(U_{\alpha_{2}}z^{m_{2}})_{ \geqslant 0},W_{\beta_{1}}z^{m_{1}+1}]=[B_{m_{2}\alpha_{2}},W_{\beta_{1}}z^{m_{1}+1}].
$$
Again we combine this expression with the substitution $C_{m_{1}\beta_{1}}=W_{\beta_{1}}z^{m_{1}+1}-\pi_{\geqslant 0}(W_{\beta_{1}}z^{m_{1}+1})$ and obtain that
$$
-\partial_{m_{2}\alpha_{2}}(C_{m_{1}\beta_{1}})-[C_{m_{1}\beta_{1}},B_{m_{2}\alpha_{2}}]=\partial_{m_{2}\alpha_{2}}(\pi_{\geqslant 0}(W_{\beta_{1}}z^{m_{1}+1}))+ [\pi_{\geqslant 0}(W_{\beta_{1}}z^{m_{1}+1}),B_{m_{2}\alpha_{2}}]
$$
The right hand side of this expression clearly belongs to the Lie algebra $\pi_{\geqslant 0}({\rm sl}_{n}(R)[z, z^{-1}))$. The same is true for the term $\partial_{m_{1}\beta_{1}}(B_{m_{2}\alpha_{2}})$ and that proves the other inclusion.
\end{proof}

The reverse statement also holds:

\begin{proposition} 
\label{P3.2}
Suppose we have a deformation $\{U_{\alpha} \}$ of the type (\ref{decoU}) and a deformation $\{W_{\beta} \}$ of the form (\ref{defS}). Assume that the two sets of projections $\{ B_{m\alpha}:=\pi_{\geqslant 0}(U_{\alpha}z^{m}) \}$
and 
$\{ C_{m\beta}:=\pi_{< 0}(W_{\beta}z^{m+1}) \}$
satisfy the zero curvature relations (\ref{ZC1AKNS}), (\ref{ZC2AKNS}) and (\ref{ZC3AKNS}). Then the deformation $(\{U_{\alpha} \},\{W_{\beta} \})$ of the initial basis $\{E_{\alpha}z^{m}\}$ of $C$ is a solution of the combined $({\rm sl}_{n}(\mathbb{C}), \mathfrak{t})$-hierarchy.
\end{proposition}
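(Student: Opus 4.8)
The plan is to mirror the mechanism of Proposition~\ref{P2.2}: assume that one of the Lax equations in (\ref{Lax1combAKNS}) or (\ref{Lax2combAKNS}) fails, write the discrepancy as a $z$-series with a nonzero extreme coefficient, and then contradict the zero curvature relations by multiplying through with an unbounded power of $z$. First observe that the two ``diagonal'' halves of the system come for free. The relations (\ref{ZC2AKNS}) are exactly the zero curvature form (\ref{ZCAKNS}) of the $({\rm sl}_{n}(\mathbb{C}),\mathfrak{t})$-hierarchy for the cut-offs $\{ B_{m\alpha} \mid m\geqslant 0 \}$, so Proposition~\ref{P2.2}(1) yields $\partial_{m\alpha_{1}}(U_{\alpha_{2}})=[B_{m\alpha_{1}},U_{\alpha_{2}}]$ for all $m\geqslant 0$, i.e.\ the first family of equations in (\ref{Lax1combAKNS}). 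Likewise, after the substitution $z\mapsto 1/z$ the relations (\ref{ZC3AKNS}) become the zero curvature form (\ref{ZCSAKNS}) of the strict hierarchy for the translated cut-offs, so Proposition~\ref{P2.2}(2) gives $\partial_{m\beta_{1}}(W_{\beta_{2}})=[C_{m\beta_{1}},W_{\beta_{2}}]$ for all $m<0$, the first family in (\ref{Lax2combAKNS}). Hence it remains to establish the two ``mixed'' equations $\partial_{m\alpha_{1}}(W_{\beta})=[B_{m\alpha_{1}},W_{\beta}]$ for $m\geqslant 0$ and $\partial_{m\beta_{1}}(U_{\alpha})=[C_{m\beta_{1}},U_{\alpha}]$ for $m<0$, and for both the mixed zero curvature relation (\ref{ZC1AKNS}) is the relevant input.

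Consider the second of these. Suppose it fails for some $m<0$, say $\partial_{m\beta_{1}}(U_{\alpha_{2}})-[C_{m\beta_{1}},U_{\alpha_{2}}]=\sum_{j\leqslant k}X_{j}z^{j}$ with $X_{k}\neq 0$; since $U_{\alpha_{2}}$ and $\partial_{m\beta_{1}}(U_{\alpha_{2}})$ have order $\leqslant 0$ while $C_{m\beta_{1}}=\pi_{<0}(W_{\beta_{1}}z^{m+1})$ has order $\leqslant -1$, one has $k\leqslant 0$. As the powers of $z$ are central, multiplying by $z^{N}$ gives $\partial_{m\beta_{1}}(U_{\alpha_{2}}z^{N})-[C_{m\beta_{1}},U_{\alpha_{2}}z^{N}]=\sum_{j\leqslant k}X_{j}z^{j+N}$, an expression of order exactly $k+N$, which is unbounded in $N$. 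On the other hand, substituting $U_{\alpha_{2}}z^{N}=B_{N\alpha_{2}}+\pi_{<0}(U_{\alpha_{2}}z^{N})$ and using (\ref{ZC1AKNS}) with indices $m\beta_{1}$ (here $m<0$) and $N\alpha_{2}$ (here $N\geqslant 0$) to replace $\partial_{m\beta_{1}}(B_{N\alpha_{2}})-[C_{m\beta_{1}},B_{N\alpha_{2}}]$ by $\partial_{N\alpha_{2}}(C_{m\beta_{1}})$, one rewrites the same discrepancy as $\partial_{N\alpha_{2}}(C_{m\beta_{1}})+\partial_{m\beta_{1}}(\pi_{<0}(U_{\alpha_{2}}z^{N}))-[C_{m\beta_{1}},\pi_{<0}(U_{\alpha_{2}}z^{N})]$, which lies in $\pi_{<0}({\rm sl}_{n}(R)[z, z^{-1}))$ and hence has order $\leqslant -1$ independently of $N$. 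This contradiction forces the equation to hold for every $m<0$.

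The remaining mixed equation is treated symmetrically, but the escape to infinity now runs the other way: $W_{\beta_{2}}$ lives in ${\rm sl}_{n}(R)[z^{-1}, z)$, with only finitely many negative powers, so from a presumed discrepancy $\partial_{m\alpha_{1}}(W_{\beta_{2}})-[B_{m\alpha_{1}},W_{\beta_{2}}]=\sum_{j\geqslant k}X_{j}z^{j}$ with $X_{k}\neq 0$, one multiplies by $z^{-N}$ and watches the trailing order $k-N$ descend without bound; meanwhile, writing $W_{\beta_{2}}z^{-N}=C_{-N-1,\beta_{2}}+\pi_{\geqslant 0}(W_{\beta_{2}}z^{-N})$ and applying (\ref{ZC1AKNS}) with indices $(-N-1)\beta_{2}$ and $m\alpha_{1}$ shows the discrepancy equals $\partial_{-N-1,\beta_{2}}(B_{m\alpha_{1}})+\partial_{m\alpha_{1}}(\pi_{\geqslant 0}(W_{\beta_{2}}z^{-N}))-[B_{m\alpha_{1}},\pi_{\geqslant 0}(W_{\beta_{2}}z^{-N})]$, which lies in $\pi_{\geqslant 0}({\rm sl}_{n}(R)[z^{-1}, z))$, of order $\geqslant 0$ regardless of $N$ --- again a contradiction. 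The only genuinely delicate point in the whole argument is the order bookkeeping: for each mixed equation one must choose the direction, $z^{N}$ versus $z^{-N}$, in which the discrepancy grows while the zero curvature side stays bounded, and one must match the index pattern ($m_{1}<0$, $m_{2}\geqslant 0$) of (\ref{ZC1AKNS}) correctly. Beyond that, no new idea is needed over what already appears in Propositions~\ref{P2.2} and~\ref{P3.1}.
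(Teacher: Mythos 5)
Your proposal is correct and follows essentially the same route as the paper's proof: the diagonal families are delegated to Proposition \ref{P2.2}, and each mixed Lax equation is established by contradiction, multiplying the assumed discrepancy by $z^{N}$ (resp.\ $z^{-N}$, which is the paper's $z^{s+1}$ with $s=-N-1<0$) and playing the unbounded order of the discrepancy against the bounded order forced by the mixed zero curvature relation (\ref{ZC1AKNS}). The order bookkeeping and the index choices in (\ref{ZC1AKNS}) match the paper's argument exactly.
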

\begin{proof}
It has been shown in Proposition \ref{P2.2} that the set of zero curvature relations (\ref{ZC2AKNS}) suffice to prove the first set of Lax equations in (\ref{Lax1combAKNS}).
Also the first set of Lax equations in (\ref{Lax2combAKNS}) follows from the zero curvature relations (\ref{ZC3AKNS}). So, assume first that one of the remaining Lax equations for one of the $\{U_{\alpha}\}$ does not hold, then there is a $\ell_{1}<0$ such that
$$
\partial_{\ell_{1}\beta}(U_{\alpha})-[(W_{\beta}z^{\ell_{1}+1})_{ < 0},U_{\alpha}]=\partial_{\ell_{1}}(U_{\alpha})-[C_{\ell_{1}\beta},U_{\alpha}]= \sum_{k \leqslant k_{2}}A_{k}z^{k}, \text{ with } A_{k_{2}} \neq 0.
$$
This implies for all $\ell \geqslant 0$ that
\begin{equation}
\label{Qm2}
\partial_{\ell_{1}\beta}(U_{\alpha}z^{\ell})-[C_{\ell_{1}\beta},U_{\alpha}z^{\ell}]= \sum_{k \leqslant k_{2}}A_{k}z^{k+\ell}
\end{equation}
and by letting $\ell$ go to infinity, you see that nonzero terms with unlimited high powers of $z$ occur in the expression (\ref{Qm2}). On the other hand, we can split the expression and substitute the identity (\ref{ZC1AKNS}) for $m_{1}=\ell_{1}$ and $m_{2}=\ell$, yielding
\begin{align*}
\partial_{\ell_{1}\beta}(U_{\alpha}z^{\ell})-[C_{\ell_{1}\beta},U_{\alpha}z^{\ell}]&= 
\partial_{\ell_{1}\beta}(B_{\ell \alpha})-[C_{\ell_{1}\beta},B_{\ell \alpha}]
+\partial_{\ell_{1}\beta}(\pi_{<0}(U_{\alpha}z^{\ell}))\\
&\;\;\;-[C_{\ell_{1}\beta},\pi_{<0}(U_{\alpha}z^{\ell})]\\
&=\partial_{\ell \alpha}(C_{\ell_{1}\beta})+\partial_{\ell_{1}\beta}(\pi_{<0}(U_{\alpha}z^{\ell}))-[C_{\ell_{1}\beta},\pi_{<0}(U_{\alpha}z^{\ell})]
\end{align*}
and this last expression has only negative powers of $z$. This contradicts the unlimited growth of these powers, so all the Lax equations for the $\{U_{\alpha}\}$ have to hold. So, one can only have one of the remaining Lax equations of one of the $\{W_{\beta}\}$ to be wrong. Suppose, there is a $s_{1}\geqslant 0$ such that we have
$$
\partial_{s_{1}\alpha}(W_{\beta})-[(Qz^{s_{1}})_{ \geqslant 0},W_{\beta}]=\partial_{s_{1}\alpha}(W_{\beta})-[B_{s_{1}\alpha},W_{\beta}]=\sum_{k \geqslant k_{1}}D_{k}z^{k}, \text{ with } D_{k_{1}} \neq 0.
$$
Then we get similarly for all $s < 0$ that
\begin{equation}
\label{Ss1}
\partial_{s_{1}\alpha}(W_{\beta}z^{s+1})-[B_{s_{1}\alpha},W_{\beta}z^{s+1}]= \sum_{k \geqslant k_{1}}D_{k}z^{k+s+1}
\end{equation}
and by letting $s$ go to minus infinity, one sees that there is no lower bound for the powers of $z$ in expression (\ref{Ss1}) with nonzero coefficients. Again we split the expression and substitute relation (\ref{ZC1AKNS}) for $m_{1}=s$ and $m_{2}=s_{1}$. This results in
\begin{align*}
\partial_{s_{1}\alpha}(W_{\beta}z^{s+1})-[B_{s_{1}\alpha},W_{\beta}z^{s+1}]= 
\partial_{s_{1}\alpha}(C_{s})-[B_{s_{1}\alpha},C_{s}]
+\partial_{s_{1}\alpha}(\pi_{\geqslant 0}(W_{\beta}z^{s+1}))\\-[B_{s_{1}\alpha},\pi_{\geqslant 0}(W_{\beta}z^{s+1})]
=\partial_{s\beta}(B_{s_{1}\alpha})+\partial_{s_{1}\alpha}(\pi_{\geqslant 0}(W_{\beta}z^{s+1}))-[B_{s_{1}\alpha},\pi_{\geqslant 0}(W_{\beta}z^{s+1})]
\end{align*}
and now this last expression contains only positive powers of $z$. Again a contradiction and thus also the Lax equations for all the $\{W_{\beta}\}$ hold. This proves this Proposition.
\end{proof}

\section{The linearization of the combined $({\rm sl}_{n}(\mathbb{C}), \mathfrak{t})$-hierarchy}
\label{linhier}

The zero curvature form of the combined $({\rm sl}_{n}(\mathbb{C}), \mathfrak{t})$-hierarchy points at the possible existence of a linear system of which the zero curvature equations form the compatibility conditions. In \cite{H2016} we produced such linearizations for both the AKNS-hierarchy and its strict version. We adept those to the setting here and the presence of the additional variables. 

So we start with a deformation $(\{U_{\alpha} \},\{W_{\beta} \})$ of the initial basis $\{E_{\alpha}z^{m}\}$ of $C$ that is a potential solution to the combined $({\rm sl}_{n}(\mathbb{C}), \mathfrak{t})$-hierarchy
As in the foregoing Section we associate with such a pair two sets of projections the $\{ B_{m\alpha}:=\pi_{\geqslant 0}(U_{\alpha}z^{m})  \}$ and the 
$\{ C_{m\beta}:=\pi_{< 0}(W_{\beta}z^{m+1})  \}$.  
Then the {\it linearization of the combined $({\rm sl}_{n}(\mathbb{C}), \mathfrak{t})$-hierarchy} consists of the system
\begin{align}
\label{Lin1combAKNS}
&U_{\alpha} \psi= \psi E_{\alpha}, \partial_{m \alpha}(\psi)=B_{m \alpha}\psi, \text{ for all $m \geqslant 0, \alpha, \alpha \in [1,r] \cap \mathbb{Z},$ and } \\ \notag
&\partial_{m\beta}(\psi)=C_{m\beta}\psi, \text{ for all $m < 0$}, \beta \in [1,r] \cap \mathbb{Z},\\
\label{Lin2combAKNS}
&W_{\beta} \varphi= \varphi E_{\beta}z^{-1}, \partial_{m\beta}(\varphi)=C_{m\beta}\varphi, \text{ for all $m <0 , \beta \in [1,r] \cap \mathbb{Z},$ and }\\ \notag 
&\partial_{m\alpha}(\varphi)=B_{m\alpha}\varphi, \text{ for all $m \geqslant 0, \alpha \in [1,r] \cap \mathbb{Z}$}.
\end{align}
Without specifying $\psi$ and $\varphi$, we first show what is needed to get from (\ref{Lin1combAKNS}) and (\ref{Lin2combAKNS}) the Lax
equations for $\{U_{\alpha} \}$ and the $\{W_{\beta} \}$. In fact, we give the manipulations for the $\{U_{\alpha} \}$, those for the $\{W_{\beta} \}$ are similar. First we apply $\partial_{m \alpha_{1}}, m \geqslant 0, \alpha_{1} \in [1,r] \cap \mathbb{Z},$ to the first equation in (\ref{Lin1combAKNS}) and use the first two equations in the sequel
\begin{align}
\notag
\partial_{m\alpha_{1}}(U_{\alpha_{2}}\psi -\psi E_{\alpha_{2}})&=\partial_{m\alpha_{1}}(U_{\alpha_{2}} )\psi +U_{\alpha_{2}}\partial_{m\alpha_{1}}(\psi)-\partial_{m\alpha_{1}}(\psi)E_{\alpha_{2}}=0\\ \notag
&=\partial_{m\alpha_{1}}(U_{\alpha_{2}} )\psi+U_{\alpha_{2}}B_{m\alpha_{1}} \psi-B_{m\alpha_{1}} \psi E_{\alpha_{2}}\\ \label{Lincomb1Lax}
&=\left\{ \partial_{m\alpha_{1}}(U_{\alpha_{2}})-[B_{m\alpha_{1}},U_{\alpha_{2}}] \right\} \psi=0.
\end{align}
Now we carry out the same computation for $\partial_{m\beta}, m < 0,\beta \in [1,r] \cap \mathbb{Z},$ make use of the first and third equation in (\ref{Lin1combAKNS}) and obtain:
\begin{align}
\notag
\partial_{m\beta}(U_{\alpha}\psi -\psi E_{\alpha})&=\partial_{m\beta}(U_{\alpha} )\psi +U_{\alpha}\partial_{m\beta}(\psi)-\partial_{m\beta}(\psi)E_{\alpha}=0\\ \notag
&=\partial_{m\beta}(U_{\alpha} )\psi+U_{\alpha}C_{m\beta} \psi-C_{m\beta} \psi E_{\alpha}\\ \label{Lincomb2Lax}
&=\left\{ \partial_{m\beta}(U_{\alpha})-[C_{m\beta},U_{\alpha}] \right\} \psi=0.
\end{align}
If we can scratch $\psi$ from both equations (\ref{Lincomb1Lax}) and (\ref{Lincomb2Lax}), then we obtain the desired Lax equations for each $U_{\alpha}$. Summarizing the manipulations carried out, we need, first of all, a left action of elements like $U_{\alpha}$, $B_{m\alpha}$ and $C_{m\beta}$. Next there should be a right action of $E_{\alpha}$ and an appropriate left action of all the $\partial_{m\alpha}, m \in \mathbb{Z},$ that obeys a Leibnitz rule w.r.t. the action of the elements from ${\rm sl}_{n}(R)[z, z^{-1})$ and finally the scratch procedure. This can all be realized for suitable $\psi$ in an appropriate ${\rm gl}_{n}(R)[z, z^{-1})$-module. Similarly, one can deduce the Lax equations for the $\{W_{\beta} \}$ from (\ref{Lin2combAKNS}) if $\varphi$ is a suitable vector in a certain ${\rm gl}_{n}(R)[z^{-1}, z)$-module.

To get an idea of these modules, we first have a look at the linearization for the trivial solutions $U_{\alpha}=E_{\alpha}$ and $W_{\beta}=E_{\beta}z^{-1}$. Then 
the projections are $B_{m\alpha}=E_{\alpha}z^{m}, m \geqslant 0, \alpha \in [1,r] \cap \mathbb{Z},$ and $C_{m\beta}=E_{\beta}z^{m}, m<0, \beta \in [1,r] \cap \mathbb{Z},$ and the equations of the linearization become
\begin{align}
\label{tsol1combAKNS}
&E_{\alpha} \psi_{0}= \psi_{0} E_{\alpha}, \partial_{m\alpha}(\psi_{0})=E_{\alpha}z^{m}\psi_{0}, \text{, $m \geqslant 0,\alpha \in [1,r] \cap \mathbb{Z}$, and }\\ \notag 
&\partial_{m\beta}(\psi_{0})=E_{\beta}z^{m}\psi_{0}, \text{ $m < 0$}, \beta \in [1,r] \cap \mathbb{Z},\\
\label{tsol2combAKNS}
&E_{\alpha}z^{-1} \varphi_{0}= \varphi_{0} E_{\alpha}z^{-1}, \partial_{m\beta}(\varphi)=E_{\beta}z^{m}\varphi, \text{, $m <0, \beta \in [1,r] \cap \mathbb{Z} $ and } \\ \notag
&\partial_{m\alpha}(\varphi)=E_{\alpha}z^{m}\varphi, \text{ $m \geqslant 0, \alpha \in [1,r] \cap \mathbb{Z}$}.
\end{align}
Assuming that each derivation $\partial_{m\alpha}$ equals $\frac{\partial}{\partial t_{m \alpha}}$ and writing $t$ as a short hand notation for all the flow parameters $\{t_{m\alpha} \mid m \in \mathbb{Z}, \alpha \in [1,r] \cap \mathbb{Z}\}$, one arrives for (\ref{tsol1combAKNS}) and (\ref{tsol2combAKNS}) at the solution $(\psi_{0},\varphi_{0})$ 
$$ 
\psi_{0}=\psi_{0}(t,z)=\exp(\sum_{m \in \mathbb{Z}} \sum_{\alpha =1}^{r}t_{m\alpha}E_{\alpha}z^{m})=\varphi_{0}(t,z)=\varphi_{0}.
$$
General $\psi$ should be ${\rm gl}_{n}(R)[z, z^{-1})$-perturbations of $\psi_{0}$, i.e. they should belong to 
\begin{equation}
\label{Mgeq0}
\mathcal{M}_{\geqslant 0}=\left\{ \{g(z)\}\psi_{0}
\mid g(z)=\sum_{i=-\infty}^{N}g_{i}z^{i} \in {\rm gl}_{n}(R)[z, z^{-1}) \right\}
\end{equation}
and general $\varphi $ should be ${\rm gl}_{n}(R)[z^{-1}, z)$-perturbations of $\varphi_{0}$, i.e. they should belong to 
 \begin{equation}
\label{M<0}
\mathcal{M}_{<0}=\left\{ \{h(z)\} \varphi_{0}
\mid h(z)=\sum_{i=-N}^{\infty}h_{i}z^{i} \in {\rm gl}_{n}(R)[z^{-1},z) \right\},
\end{equation}
where the products $\{g(z)\}\psi_{0}$ and $\{h(z)\} \varphi_{0}$ should be seen as formal and both factors should be kept separate to avoid convergence issues.
On both $\mathcal{M}_{\geqslant 0}$ and $\mathcal{M}_{<0}$ one can define the required actions: for each $k_{1}(z) \in {\rm gl}_{n}(R)[z, z^{-1})$ and each $k_{2}(z) \in {\rm gl}_{n}(R)[z^{-1}, z)$ define
\[
k_{1}(z).\{g(z)\}\psi_{0}:=\{k_{1}(z)g(z)\}\psi_{0} \text{ resp. }k_{2}(z).\{h(z)\} \varphi_{0}:=\{k_{2}(z)h(z)\} \varphi_{0}.
\]
The right hand action of each $E_{\alpha}$ on $\mathcal{M}_{\geqslant 0}$ resp. $E_{\alpha}z^{-1}$ on $\mathcal{M}_{< 0}$ we define by
\[
\{g(z)\}\psi_{0}E_{\alpha}:=\{g(z)E_{\alpha} \}\psi_{0} \text{ resp. }\{h(z)\} \varphi_{0}E_{\alpha}z^{-1}:=\{h(z)E_{\alpha}z^{-1}\} \varphi_{0}
\]
and the action of each $\partial_{m\alpha}$ by
\begin{align*}
\partial_{m\alpha}(\{g(z)\}\psi_{0})=\left\{ \sum_{i=-\infty}^{N}\partial_{m\alpha}(g_{i})z^{i} +\left\{ \sum_{i=-\infty}^{N}g_{i}E_{\alpha}z^{i+m} \right\} \right\}\psi_{0},\\
\partial_{m\alpha}(\{h(z)\}\varphi_{0})=\left\{ \sum_{i=-N}^{\infty}\partial_{m\alpha}(h_{i})z^{i} +\left\{ \sum_{i=-N}^{\infty}h_{i}E_{\alpha}z^{i+m} \right\} \right\}\varphi_{0}.
\end{align*}
Analogous to the terminology in the scalar case, see \cite{DJKM83}, we call the elements of $\mathcal{M}_{\geqslant 0}$ {\it oscillating matrices at infinity} and those of $\mathcal{M}_{<0}$ {\it oscillating matrices at zero}.
Note that $\mathcal{M}_{\geqslant 0}$ is a free ${\rm gl}_{n}(R)[z, z^{-1})$-module and $\mathcal{M}_{<0}$ a free ${\rm gl}_{n}(R)[z^{-1},z)$-modules with respective generators $\psi_{0}$ and $\varphi_{0}$, because for 
each $k_{1}(z) \in {\rm gl}_{n}(R)[z, z^{-1})$ and $k_{1}(z) \in {\rm gl}_{n}(R)[ z^{-1},z)$ we have
\[
k_{1}(z).\psi_{0}=k_{1}(z).\{1\}\psi_{0}=\{k_{1}(z)\}\psi_{0} \text{ resp. }k_{2}(z).\varphi_{0}=k_{2}(z).\{1\}\psi_{0}=\{k_{2}(z)\}\varphi_{0}. 
\]
Hence, in order to be able to perform legally the scratching of both vectors $\psi=\{k_{1}(z)\}\psi_{0}$ and $\varphi=\{k_{2}(z)\}\varphi_{0}$, it is enough to find oscillating matrices such that 
$k_{1}(z)$ is invertible in ${\rm gl}_{n}(R)[z, z^{-1})$ and likewise $k_{2}(z)$ in ${\rm gl}_{n}(R)[ z^{-1},z)$. We will now introduce a collection of such elements that will occur at the construction of solutions of the hierarchy.

Recall from Section 3 that we may assume that $\mathfrak{t}$ is realized in the upper triangular matrices. For each $l=(l_{i}) \in \mathbb{Z}^{n},$ we define the $n \times n$-matrix $\delta(l)$ by
 $$
 \delta(l)=\left(
\begin{matrix}
z^{l_{1}}& 0&0 \\
0&\ddots &0 \\
0& 0& z^{l_{n}}
\end{matrix}
\right) \in  {\rm gl}_{n}(R)[z,z^{-1}].
 $$
The collection of all these matrices forms a group $\Delta$ and we consider its subgroup $\Delta(\mathfrak{t})$
of all matrices 
$$
\Delta(\mathfrak{t})=\left\{  \delta(l) \in \Delta  \mid [ \delta(l),E_{\alpha}]=0 \text{ for all } \alpha, 1 \leqslant \alpha \leqslant r \right\}.
$$

For each $\delta(l) \in \Delta(\mathfrak{t})$, an element $\psi \in \mathcal{M}_{\geqslant 0}$ is called an {\it oscillating matrix at infinity of type} $\delta(l), $ if it has the form 
\begin{equation}
\label{omidm}
\psi=\{ k_{1}(z) \delta(l) \}\psi_{0}, \text{ with } k_{1}(z) \in G_{<0},
\end{equation}
and is an example of a generator of $\mathcal{M}_{\geqslant 0}$. Similarly, an element $\varphi \in \mathcal{M}_{< 0}$ is called an {\it oscillating matrix at zero of type} $\delta(l), \text{ with }\delta(l) \in \Delta(\mathfrak{t}),$ if it has the form 
\begin{equation}
\label{omzdm}
\varphi=\{ k_{2}(z) \delta(l) \}\varphi_{0}, \text{ with } k_{2}(z) \in G_{\geqslant 0},
\end{equation}
and such a $\varphi$ generates $\mathcal{M}_{< 0}$. Hence, for any pair $(\psi,\varphi) \in \mathcal{M}_{\geqslant 0} \times \mathcal{M}_{< 0}$ with $\psi$ of the form (\ref{omidm}) and $\varphi$ of the form (\ref{omzdm}) the scratching procedure is valid. 

Now let the deformation $(\{U_{\alpha} \},\{W_{\beta} \})$ be a potential solution of the combined $({\rm sl}_{n}(\mathbb{C}), \mathfrak{t})$-hierarchy and let $(\psi,\varphi)$ be a pair in $ \mathcal{M}_{\geqslant 0} \times \mathcal{M}_{< 0}$ with $\psi$ of the form (\ref{omidm}) and $\varphi$ of the form (\ref{omzdm}), for which the linearization equations 
(\ref{Lincomb1Lax}) and (\ref{Lincomb2Lax}) hold. Then all the manipulations necessary to get the Lax equations (\ref{Lax1combAKNS}) and 
(\ref{Lax2combAKNS}), are valid. Hence, the set $(\{U_{\alpha} \},\{W_{\beta} \})$ is a solution of the combined $({\rm sl}_{n}(\mathbb{C}), \mathfrak{t})$-hierarchy, and we call the pair $(\psi,\varphi)$ a set of {\it wave matrices of the combined $({\rm sl}_{n}(\mathbb{C}), \mathfrak{t})$-hierarchy of type} $\delta(l)$. In particular, the pair $(\psi,\varphi)$ totally determines the solution $(\{U_{\alpha} \},\{W_{\beta} \})$, for the first equations in (\ref{Lincomb1Lax}) and (\ref{Lincomb2Lax}), imply respectively
\begin{align*} 
&U_{\alpha} k_{1}(z) \delta(l)=k_{1}(z)E_{\alpha}\delta(l) \Rightarrow U_{\alpha}=k_{1}(z)E_{\alpha}k_{1}(z)^{-1},\\
&W_{\beta} k_{2}(z) \delta(l)=k_{2}(z)E_{\beta}z^{-1}\delta(l) \Rightarrow W_{\beta}=k_{2}(z)E_{\beta}z^{-1}k_{2}(z)^{-1}.
\end{align*}
There is a milder condition that pairs of oscillating matrices of a certain type have to satisfy, in order to become a set of wave matrices of the combined $({\rm sl}_{n}(\mathbb{C}), \mathfrak{t})$-hierarchy.
\begin{proposition}
\label{P4.1}
Let $\psi=\{ k_{1}(z) \delta(l) \}\psi_{0}$ be an oscillating matrix of type $\delta(l)$ in $\mathcal{M}_{\geqslant 0}$ and let $\varphi=\{ k_{2}(z) \delta(l) \}\varphi_{0}$ be such a matrix in $\mathcal{M}_{<0}$. Denote the corresponding potential solution of the combined $({\rm sl}_{n}(\mathbb{C}), \mathfrak{t})$-hierarchy by
$$
U_{\alpha}:=k_{1}(z)E_{\alpha}k_{1}(z)^{-1}, \text{ resp.} W_{\beta}=k_{2}(z)E_{\beta}z^{-1}k_{2}(z)^{-1}.
$$
If there exists for each $m \geqslant 0,  1 \leqslant \alpha \leqslant r,$ an element $M_{m\alpha} \in \pi_{\geqslant 0}({\rm gl}_{n}(R)[z, z^{-1}))$ such that 
\[
\partial_{m\alpha}(\psi)=M_{m\alpha}\psi \text{ and } \partial_{m\alpha}(\varphi)=M_{m\alpha}\varphi
\]
and, moreover, for all $m<0, 1 \leqslant \beta \leqslant r,$ there exists an $N_{m\beta} \in \pi_{> 0}({\rm gl}_{n}(R)[z^{-1}, z))$ such that 
\[
\partial_{m\beta}(\psi)=N_{m\beta}\psi \text{ and }\partial_{m\beta}(\varphi)=N_{m\beta}\varphi.
\]
Then each $M_{m\alpha}=\pi_{\geqslant 0}(E_{\alpha}z^{m}), $ and each $N_{m\beta}=\pi_{<0}(W_{\beta}z^{m+1})$ and the pair $(\psi,\varphi)$ is a set of wave matrices for the combined $({\rm sl}_{n}(\mathbb{C}), \mathfrak{t})$-hierarchy of type $\delta(l)$.
\end{proposition}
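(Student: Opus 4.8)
The plan is to read $M_{m\alpha}$ and $N_{m\beta}$ off the dressed form of $\psi$ and $\varphi$ and then invoke the scratching machinery set up just before the statement. The two structural facts to exploit are: $\psi=\{k_{1}(z)\delta(l)\}\psi_{0}$ with $k_{1}(z)\in G_{<0}$ and $\varphi=\{k_{2}(z)\delta(l)\}\varphi_{0}$ with $k_{2}(z)\in G_{\geqslant 0}$ are generators of the free modules $\mathcal{M}_{\geqslant 0}$ and $\mathcal{M}_{<0}$ (both $k_{i}(z)$ and $\delta(l)$ being invertible in the relevant algebra), and the derivations shift the dressing factors in a controlled way because $\delta(l)\in\Delta(\mathfrak{t})$ commutes with every $E_{\alpha}$ and the powers of $z$ are central.

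First I would compute $\partial_{m\alpha}(\psi)$ for $m\geqslant 0$. Since $\delta(l)$ has constant coefficients and commutes with $E_{\alpha}$, the $z$-powers are central, and $k_{1}(z)E_{\alpha}=U_{\alpha}k_{1}(z)$ by the definition of $U_{\alpha}$, the prescribed action of $\partial_{m\alpha}$ on $\mathcal{M}_{\geqslant 0}$ gives $\partial_{m\alpha}(\psi)=\{(\partial_{m\alpha}(k_{1}(z))+U_{\alpha}z^{m}k_{1}(z))\delta(l)\}\psi_{0}$. Comparing this with $M_{m\alpha}\psi=\{M_{m\alpha}k_{1}(z)\delta(l)\}\psi_{0}$ and cancelling the generator $\psi_{0}$ and the invertible factor $\delta(l)$ yields $M_{m\alpha}=\partial_{m\alpha}(k_{1}(z))k_{1}(z)^{-1}+U_{\alpha}z^{m}$ in ${\rm gl}_{n}(R)[z, z^{-1})$. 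Since $k_{1}(z)\in G_{<0}$, both $\partial_{m\alpha}(k_{1}(z))$ and $k_{1}(z)^{-1}-\Id$ lie in ${\rm gl}_{n}(R)[z, z^{-1})_{<0}$, so $\partial_{m\alpha}(k_{1}(z))k_{1}(z)^{-1}\in\pi_{<0}({\rm gl}_{n}(R)[z, z^{-1}))$; applying $\pi_{\geqslant 0}$ to the identity and using $M_{m\alpha}\in\pi_{\geqslant 0}({\rm gl}_{n}(R)[z, z^{-1}))$ forces $M_{m\alpha}=\pi_{\geqslant 0}(U_{\alpha}z^{m})=B_{m\alpha}$, and, as a by-product, the evolution law $\partial_{m\alpha}(k_{1})k_{1}^{-1}=-\pi_{<0}(U_{\alpha}z^{m})$.

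The determination of $N_{m\beta}$ for $m<0$ is formally the same, now with $\varphi=\{k_{2}(z)\delta(l)\}\varphi_{0}$, $k_{2}(z)\in G_{\geqslant 0}$, and the relation $k_{2}(z)E_{\beta}z^{m}=W_{\beta}z^{m+1}k_{2}(z)$ coming from $W_{\beta}=k_{2}(z)E_{\beta}z^{-1}k_{2}(z)^{-1}$; one gets $N_{m\beta}=\partial_{m\beta}(k_{2}(z))k_{2}(z)^{-1}+W_{\beta}z^{m+1}$ in ${\rm gl}_{n}(R)[z^{-1}, z)$. Because $k_{2}(z)$ is the product of an invertible constant matrix with an element of ${\rm gl}_{n}(R)[z^{-1}, z)_{>0}$, the term $\partial_{m\beta}(k_{2})k_{2}^{-1}$ carries no strictly negative powers of $z$, and the same splitting argument — now with the two halves of the decomposition interchanged — isolates $N_{m\beta}=\pi_{<0}(W_{\beta}z^{m+1})=C_{m\beta}$ (together with $\partial_{m\beta}(k_{2})k_{2}^{-1}=-\pi_{\geqslant 0}(W_{\beta}z^{m+1})$). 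At this point the remaining two hypotheses $\partial_{m\alpha}(\varphi)=M_{m\alpha}\varphi$ and $\partial_{m\beta}(\psi)=N_{m\beta}\psi$ read $\partial_{m\alpha}(\varphi)=B_{m\alpha}\varphi$ and $\partial_{m\beta}(\psi)=C_{m\beta}\psi$; together with the immediate identities $U_{\alpha}\psi=\psi E_{\alpha}$ and $W_{\beta}\varphi=\varphi E_{\beta}z^{-1}$ (which follow from $U_{\alpha}=k_{1}E_{\alpha}k_{1}^{-1}$, $W_{\beta}=k_{2}E_{\beta}z^{-1}k_{2}^{-1}$ and $[\delta(l),E_{\alpha}]=0$) this is exactly the linearization system $(\ref{Lin1combAKNS})$--$(\ref{Lin2combAKNS})$ for $(\psi,\varphi)$. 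Since $\psi$ and $\varphi$ generate their free modules, the scratching procedure is legitimate, and applying it to $(\ref{Lincomb1Lax})$, $(\ref{Lincomb2Lax})$ and their $\{W_{\beta}\}$-analogues produces the Lax equations $(\ref{Lax1combAKNS})$ and $(\ref{Lax2combAKNS})$; hence $(\{U_{\alpha}\},\{W_{\beta}\})$ solves the combined $({\rm sl}_{n}(\mathbb{C}),\mathfrak{t})$-hierarchy and $(\psi,\varphi)$ is a set of wave matrices of type $\delta(l)$.

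The technical core is the explicit evaluation of the two actions $\partial_{m\alpha}(\psi)$ and $\partial_{m\beta}(\varphi)$ in terms of the dressing factors $k_{1},k_{2}$ and $\delta(l)$ — where $\delta(l)\in\Delta(\mathfrak{t})$ and the centrality of the $z$-powers enter — together with the two bookkeeping observations that $\partial_{m\alpha}(k_{1})k_{1}^{-1}$ stays strictly below degree $0$ while $\partial_{m\beta}(k_{2})k_{2}^{-1}$ stays at degree $0$ and above; these are exactly what let the complementary projection single out $B_{m\alpha}$ and $C_{m\beta}$. Everything afterwards — the scratching and the recovery of the four families of Lax equations — is the bookkeeping already carried out just before the proposition.
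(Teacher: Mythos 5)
Your proof is correct and takes essentially the same route as the paper's: use the freeness of $\mathcal{M}_{\geqslant 0}$ and $\mathcal{M}_{<0}$ to convert the hypotheses into the algebra identities $M_{m\alpha}=\partial_{m\alpha}(k_{1})k_{1}^{-1}+U_{\alpha}z^{m}$ and $N_{m\beta}=\partial_{m\beta}(k_{2})k_{2}^{-1}+W_{\beta}z^{m+1}$, and then project onto the appropriate half of the splitting. Your added bookkeeping (that $\partial_{m\alpha}(k_{1})k_{1}^{-1}$ is strictly negative in $z$, that $\partial_{m\beta}(k_{2})k_{2}^{-1}$ has no negative powers, and the final reassembly of the linearization followed by the scratching argument) just makes explicit what the paper's proof leaves to the discussion preceding the proposition.
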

\begin{proof} 
For $m \geqslant 0$ we use the fact that $\mathcal{M}_{\geqslant 0}$ is a free ${\rm gl}_{n}(R)[z, z^{-1})$-module
with generator $\psi_{0}$. This property enables us to translate the equation $\partial_{m\alpha}(\psi)=M_{m\alpha}\psi$ into an equation in ${\rm gl}_{n}(R)[z, z^{-1})$:
$$
\partial_{m\alpha}(k_{1}(z))+k_{1}(z)E_{\alpha}z^{m}=M_{m\alpha} k_{1}(z) \Rightarrow \partial_{m\alpha}(k_{1}(z))k_{1}(z)^{-1} +U_{\alpha}z^{m}=M_{m\alpha}.
$$
Projecting this onto $\pi_{\geqslant 0}({\rm gl}_{n}(R)[z, z^{-1}))$ 
yields the formula $M_{m\alpha}=\pi_{\geqslant 0}(U_{\alpha}z^{m}).$
If $m<0$, then one uses the property that $\mathcal{M}_{<0}$ is a free ${\rm gl}_{n}(R)[z^{-1},z)$-module
with generator $\varphi_{0}$. Translating the relations $\partial_{m\beta}(\varphi)=N_{m\beta}\varphi$ into equations in ${\rm gl}_{n}(R)[z^{-1},z)$ yields:
$$
\partial_{m\beta}(k_{2}(z))+k_{2}(z)E_{\alpha}z^{m}=N_{m\beta}k_{2}(z) \Rightarrow \partial_{m\beta}(k_{2}(z))k_{2}(z)^{-1} +W_{\beta}z^{m+1}=N_{m\beta}
$$
Projecting the right hand side on $\pi_{<0}({\rm gl}_{2}(R)[z^{-1}, z)$ gives us the identity we are looking for: $\pi_{<0}(W_{\beta}z^{m+1})=N_{m\beta}$.
\end{proof}

\begin{remark}
\label{R4.1}
This concludes the presentation of the algebraic framework of the linearization of the combined $({\rm sl}_{n}(\mathbb{C}), \mathfrak{t})$-hierarchy. In the next section, we present an analytic context where we can construct sets of wave matrices of this hierarchy in which the products are no longer formal, but real. 
\end{remark}

\section{A construction of solutions of the hierarchy}
\label{constr}

In this section we will show how to construct a wide class of solutions of the combined $({\rm sl}_{n}(\mathbb{C}), \mathfrak{t})$-hierarchy. This is done in the style of \cite{SW85} and \cite{PS86}. We first describe the group of loops we will work with. 
For each $0 < r<1$, let $A_{r}$ be the annulus in the complex plane given by
$$A_{r}=\{ z\mid z \in \mathbb{C}, r \leqslant |z| \leqslant \dfrac{1}{r} \}.$$  
We denote the collection of holomorphic maps from some annulus $A_{r}$ into $\GL_{n}(\mathbb{C})$ by  $L_{an}\GL_{n}(\mathbb{C})$. 
It is a group w.r.t. point wise multiplication and contains in a natural way $\GL_{n}(\mathbb{C})$ as a subgroup as the collection of constant maps into $\GL_{n}(\mathbb{C})$. Other examples of elements in $L_{an}\GL_{n}(\mathbb{C})$  are the elements of $\Delta$. However, $L_{an}\GL_{n}(\mathbb{C})$ is more than just a group, it is an infinite dimensional Lie group.
Its manifold structure comes from
its Lie algebra $L_{an}\gl_{n}(\mathbb{C})$ consisting of all holomorphic maps 
$\gamma :U \to \gl_{n}(\mathbb{C})$, where $U$ is an open neighborhood of some annulus $A_{r}, 0 < r<1.$ Since $\gl_{n}(\mathbb{C})$ is a Lie algebra, the space $L_{an}\gl_{n}(\mathbb{C})$ becomes a Lie algebra w.r.t. the point wise commutator. Topologically, the space $L_{an}\gl_{n}(\mathbb{C})$ is the direct limit of all the spaces $L_{an,r}\gl_{n}(\mathbb{C})$, where this last space consists of all $\gamma$ corresponding to the fixed annulus $A_{r}$. One gives each $L_{an,r}\gl_{n}(\mathbb{C})$ the topology of uniform convergence and with that topology it becomes a Banach space. In this way, $L_{an}\gl_{n}(\mathbb{C})$ becomes a Fr\'echet space. The point wise exponential map defines a local diffeomorphism around zero in $L_{an}\gl_{n}(\mathbb{C})$, see e.g. \cite{Hamilton82}.

Now each loop  $\ell \in L_{an}\gl_{n}(\mathbb{C})$ 
possesses an expansion in a Fourier series
\begin{equation*}
\ell= \sum_{k=-\infty}^{\infty}
\ell_{k}
z^{k}, \text{ with each }\ell_{k} \in \gl_{n}(\mathbb{C}),
\end{equation*}
that converges absolutely on the annulus it is defined:
\[
\sum_{k=-\infty}^{\infty} ||\ell_{k}||r^{-|k|} < \infty.
\]
This Fourier expansion is used to make the relevant decomposition of the Lie algebra $L_{an}\gl_{n}(\mathbb{C})$.  
Namely, consider the subspaces 
\begin{align*}
L_{an}\gl_{n}(\mathbb{C})_{\geqslant 0}:=\{ \ell \mid \ell \in L_{an}\gl_{n}(\mathbb{C}),\ell =\sum_{k=0}^{\infty}\ell_{k}z^{k}\}\\
L_{an}\gl_{n}(\mathbb{C})_{< 0}:=\{ \ell \mid \ell \in L_{an}\gl_{n}(\mathbb{C}),\ell =\sum_{k=-\infty}^{-1}\ell_{k}z^{k}\}
\end{align*}
Both are Lie subalgebras of $L_{an}\gl_{n}(\mathbb{C})$ and their direct sum equals the whole Lie algebra. The first Lie algebra consists of the elements in $L_{an}\gl_{n}(\mathbb{C})$ that extend to holomorphic maps defined on a disk around the origin of the form 
$$\{ z \in \mathbb{C} \mid |z| \leqslant \dfrac{1}{r} \},0< r <1,$$ 
and the second Lie algebra corresponds to the maps in $L_{an}\gl_{n}(\mathbb{C})$ that have a holomorphic extension towards a disk around infinity of the form 
$$\{z \in \mathbb{P}^{1}(\mathbb{C} ) \mid |z| \geqslant r \},0< r <1,$$ and that, moreover, are zero at infinity.
To each of the two Lie subalgebras belongs a subgroup of $L_{an}\GL_{n}(\mathbb{C})$. The point wise exponential map applied to elements of $L_{an}\gl_{n}(\mathbb{C})_{< 0}$ yields elements of 
\[
U_{-}=\{ \ell \mid \gamma \in L_{an}\gl_{n}(\mathbb{C}),\ell =\Id+\sum_{k=-\infty}^{-1}\ell_{k}z^{k}\}
\]
and the exponential map applied to elements of $L_{an}\gl_{n}(\mathbb{C})_{\geqslant 0}$ maps them into
\[
P_{+}=\{ \ell \mid \ell \in L_{an}\gl_{n}(\mathbb{C}),\ell =\ell_{0}+\sum_{k=1}^{\infty}\ell_{k}z^{k}, \text{ with }\ell_{0} \in \GL_{n}(\mathbb{C})\}.
\]
Both $U_{-}$ and $P_{+}$ are easily seen to be subgroups of $L_{an}\GL_{n}(\mathbb{C})$ and since the direct sum of their Lie algebras is $L_{an}\gl_{n}(\mathbb{C})$, their product 
\begin{equation}
\label{U-P+}
\Omega=U_{-}P_{+}
\end{equation}
is open in $L_{an}\GL_{n}(\mathbb{C})$ and is called, like in the finite dimensional case, the {\it big cell} w.r.t. $U_{-}$ and $P_{+}$.

The next subgroup of $L_{an}\SL_{n}(\mathbb{C})$ corresponds to the exponential factor in the linearization of the combined $({\rm sl}_{n}(\mathbb{C}), \mathfrak{t})$-hierarchy.
The commuting group relevant for this hierarchy is 
\[
\Gamma=\{ \gamma(t)=\exp(\sum_{m \in \mathbb{Z}} \sum_{\alpha=1}^{r}t_{m\alpha}E_{\alpha}z^{i}) \mid \gamma \in L_{an}\SL_{n}(\mathbb{C}) \}.
\]
The group $\Delta(\mathfrak{t})$ commutes with $\Gamma$ and contains 
the central subgroup 
$$
\Delta_{c}=\{ \delta^{k} \mid \delta=z \Id
, k \in \mathbb{Z} \}
$$ 
of $L_{an}\GL_{n}(\mathbb{C})$.

We have now all ingredients to describe the construction of the solutions 
to the combined $({\rm sl}_{n}(\mathbb{C}), \mathfrak{t})$-hierarchy.
Take inside the product $L_{an}\GL_{2}(\mathbb{C}) \times \Delta(\mathfrak{t})$ the collection $\mathcal{S}$ of pairs $(g,\delta(l))$ such that there exists a $\gamma(t), \gamma  \in \Gamma,$ satisfying 
\begin{equation}
\label{CC}
\delta(l) \gamma(t) g \gamma(t)^{-1}\delta(-l) \in \Omega=U_{-}P_{+}
\end{equation}
For each such a pair $(g,\delta(l))$, consider the collection $\Gamma(g,\delta(l))$ of all $\gamma(t) \in \Gamma$ satisfying the condition (\ref{CC}). This is an open non-empty subset of $\Gamma$. Let $R(g,\delta(l))$ be the algebra of analytic functions $\Gamma(g,\delta(l)) \to \mathbb{C}$. This is the algebra of functions $R$ that we associate with the point $(g,\delta(l)) \in \mathcal{S}$ and for the commuting derivations of $R(g,\delta(l))$ we choose the $$\partial_{m\alpha}:=\dfrac{\partial}{\partial t_{m \alpha}}, i \in \mathbb{Z}, 1 \leqslant \alpha \leqslant r.$$ By property (\ref{CC}), we have for all $\gamma(t) \in \Gamma(g,\delta(l))$
\begin{equation}
\label{decoU-P+}
\delta(l) \gamma(t) g \gamma(t)^{-1}\delta(-l)=u_{-}(g,\delta(l))(t)^{-1}p_{+}(g,\delta(l))(t), 
\end{equation}
$\text{with }u_{-}(g,\delta(l))(t) \in U_{-} \text{ and } p_{+}(g,\delta(l))(t) \in P_{+}.$
Then all the matrix coefficients in the Fourier expansions of the elements $u_{-}(g,\delta(l))$ and $p_{+}(g,\delta(l))$ belong to the algebra $R(g,\delta(l))$. From equation (\ref{decoU-P+}) one can build two oscillating matrices of type $\delta(l)$, one $\Psi_{g,\delta(l)} \in 
\mathcal{M}_{\geqslant 0}$ and the other one $\Phi_{g,\delta(l)} \in \mathcal{M}_{< 0}$. Define namely
\begin{align}
\label{wminf}
\Psi_{g,\delta(l)}(t):&=u_{-}(g,\delta(l))(t)\delta(l) \gamma(t),\\
\label{wmzero}
\Phi_{g,\delta(l)}(t):&=p_{+}(g,\delta(l))(t)\delta(l)\gamma(t),
\end{align}
and note that all the products between the different factors are well-defined. Due to relation (\ref{decoU-P+}), these two oscillating matrices of type $\delta(l)$ are related by
\begin{equation}
\label{relpsiphi}
\Psi_{g,\delta(l)}(t)=\Phi_{g,\delta(l)}(t) g^{-1}.
\end{equation}
From relation (\ref{decoU-P+}), one sees directly that for all $k \in \mathbb{Z}$ we have that if $(g,\delta(l)) \in \mathcal{S}$, then also each  $(g,\delta(l)\delta^{k}) \in \mathcal{S}$ and the sets of oscillating matrices relate according to
$$
\Psi_{g,\delta(l)\delta^{k}}=\Psi_{g,\delta(l)}\delta^{k} \text{ and } \Phi_{g,\delta(l)\delta^{k}}=\Phi_{g,\delta(l)}\delta^{k}.
$$
Now we want to show that each pair $(\Psi_{g,\delta(l)},\Phi_{g,\delta(l)})$ is a set of wave matrices of the combined $({\rm sl}_{n}(\mathbb{C}), \mathfrak{t})$-hierarchy, by using Proposition \ref{P4.1}. Thereto we compute for all $m \geqslant 0$ and $\alpha, 1 \leqslant \alpha \leqslant r$, in two ways $\partial_{m \alpha}(\Psi_{g,\delta(l)})$, once using (\ref{wminf}) and once using (\ref{wmzero}) and (\ref{relpsiphi}). This yields on one hand
\begin{align*}
&\partial_{m \alpha}(\Psi_{g,\delta(l)})=\{\partial_{m \alpha}(u_{-}(g,\delta(l)))+u_{-}(g,\delta(l))E_{\alpha}z^{m}\}\delta(l) \gamma=\\
&=\{\partial_{m \alpha}(u_{-}(g,\delta(l)))u_{-}(g,\delta(l))^{-1}+u_{-}(g,\delta(l))E_{\alpha}z^{m}u_{-}(g,\delta(l))^{-1}\}\Psi_{g,\delta(l)}
\end{align*}
and on the other
\begin{align*}
&\partial_{m \alpha}(\Psi_{g,\delta(l)})=\{\partial_{m \alpha}(p_{+}(g,\delta(l)))+p_{+}(g,\delta(l))E_{\alpha}z^{j}\}\delta(l) \gamma g^{-1}=\\
&=\{\partial_{m \alpha}(p_{+}(g,\delta(l)))p_{+}(g,\delta(l))^{-1}+p_{+}(g,\delta(l))E_{\alpha}z^{m}p_{+}(g,\delta(l))^{-1}\}\Psi_{g,\delta(l)}
\end{align*}
By comparing the two factors in front of $\Phi_{g,\delta(l)}$ in these expressions we see that
$$
M_{m\alpha}:=\partial_{m\alpha}(u_{-}(g,\delta(l)))u_{-}(g,\delta(l))^{-1}+u_{-}(g,\delta(l))E_{\alpha}z^{m}u_{-}(g,\delta(l))^{-1}
$$
belongs to $\pi_{\geqslant 0}({\rm gl}_{n}(R)[z, z^{-1}))$. Because of relation (\ref{relpsiphi}) we have that also for $\Phi_{g,\delta(l)}$, there 
holds for all $m \geqslant 0$ and $\alpha,1 \leqslant \alpha \leqslant r$
$$
\partial_{m\alpha}(\Phi_{g,\delta(l)})=M_{m\alpha}\Phi_{g,\delta(l)}.
$$
We proceed in a similar way with the computation of $\partial_{\beta}(\Phi_{g,\delta(l)})$ for all $m<0$ and $\beta, 1 \leqslant \beta \leqslant r$. Then we get the expressions
\begin{align*}
&\partial_{m \beta}(\Phi_{g,\delta(l)})=\{\partial_{m \beta}(p_{+}(g,\delta(l)))+p_{+}(g,\delta(l))E_{\beta}z^{m}\}\delta(l) \gamma g^{-1}=\\
&=\{\partial_{m \beta}(p_{+}(g,\delta(l)))p_{+}(g,\delta(l))^{-1}+p_{+}(g,\delta(l))E_{\beta}z^{m}p_{+}(g,\delta(l))^{-1}\}\Phi_{g,\delta(l)}
\end{align*}
and
\begin{align*}
&\partial_{m \beta}(\Phi_{g,\delta(l)})=\{\partial_{m \beta}(u_{-}(g,\delta(l)))+u_{-}(g,\delta(l))E_{\beta}z^{m}\}\delta(l) \gamma g=\\
&=\{\partial_{m \beta}(u_{-}(g,\delta(l)))u_{-}(g,\delta(l))^{-1}+u_{-}(g,\delta(l))E_{\beta}z^{m}u_{-}(g,\delta(l))^{-1}\}\Phi_{g,\delta(l)}
\end{align*}
Comparing the two factors in front of $\Phi_{g,\delta(l)}$ in these expressions yields that
$$
N_{m \beta}:=\partial_{m \beta}(p_{+}(g,\delta(l)))p_{+}(g,\delta(l))^{-1}+p_{+}(g,\delta(l))E_{\beta}z^{m}p_{+}(g,\delta(l))^{-1}
$$
belongs to $\pi_{< 0}({\rm gl}_{n}(R)[z^{-1}, z))$. Because of relation (\ref{relpsiphi}), also for $\Psi_{g,\delta(l)}$, there 
holds for all $j < 0$
$$
\partial_{m \beta}(\Psi_{g,\delta(l)})=N_{m \beta}\Psi_{g,\delta(l)}.
$$
So we have shown that all the conditions in Proposition \ref{P4.1} are satisfied, so that we may conclude

\begin{theorem}
\label{T5.1}
Consider the product space $\Pi:=L_{an}\GL_{2}(\mathbb{C}) \times \Delta(\mathfrak{t})$ and its subset $\mathcal{S}$ defined by (\ref{CC}). For each point $(g,\delta(l)) \in \mathcal{S}$, we define a pair of oscillating matrices $(\Psi_{g,\delta(l)},\Phi_{g,\delta(l)})$ in $\mathcal{M}_{\geqslant 0} \times \mathcal{M}_{< 0}$ by (\ref{wminf}) and (\ref{wmzero}). This pair is a set of wave matrices for the combined $({\rm sl}_{n}(\mathbb{C}), \mathfrak{t})$-hierarchy. In particular, the deformation $(\{U_{\alpha}(g,\delta(l)) \},\{W_{\beta}(g,\delta(l)) \})$ of the basis $\{ E_{\alpha}z^{m} \}$ of $C$ defined by
\begin{align*}
U_{\alpha}(g,\delta(l))&=u_{-}(g,\delta(l))E_{\alpha}u_{-}(g,\delta(l))^{-1} \text{ and }\\
W_{\beta}(g,\delta(l))&=p_{+}(g,\delta(l))E_{\beta}z^{-1}p_{+}(g,\delta(l))^{-1} ,
\end{align*}
forms a solution of the combined $({\rm sl}_{n}(\mathbb{C}), \mathfrak{t})$-hierarchy. This solution does not change if one replaces $\delta(l)$ by $\delta(l)\delta^{k}, k \in \mathbb{Z}.$
\end{theorem}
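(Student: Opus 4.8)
The plan is to assemble the pieces that are already in place in the paragraphs preceding the statement and then to quote Proposition \ref{P4.1}. Those paragraphs establish, for every $m\geqslant 0$ and every $\alpha$ with $1\leqslant\alpha\leqslant r$, an element $M_{m\alpha}\in\pi_{\geqslant 0}({\rm gl}_{n}(R)[z,z^{-1}))$ with $\partial_{m\alpha}(\Psi_{g,\delta(l)})=M_{m\alpha}\Psi_{g,\delta(l)}$ and, using the relation (\ref{relpsiphi}) together with the fact that $g$ is a $t$-independent loop, also $\partial_{m\alpha}(\Phi_{g,\delta(l)})=M_{m\alpha}\Phi_{g,\delta(l)}$; symmetrically, for every $m<0$ and every $\beta$ with $1\leqslant\beta\leqslant r$, an element $N_{m\beta}\in\pi_{<0}({\rm gl}_{n}(R)[z^{-1},z))$ with $\partial_{m\beta}(\Phi_{g,\delta(l)})=N_{m\beta}\Phi_{g,\delta(l)}$ and $\partial_{m\beta}(\Psi_{g,\delta(l)})=N_{m\beta}\Psi_{g,\delta(l)}$. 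Hence the only things still to be verified are that $\Psi_{g,\delta(l)}$ and $\Phi_{g,\delta(l)}$ really are oscillating matrices of type $\delta(l)$ in $\mathcal{M}_{\geqslant 0}$ and $\mathcal{M}_{<0}$ respectively, and that the coefficients occurring throughout genuinely lie in the algebra $R=R(g,\delta(l))$.

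For the first of these I would identify the exponential factor $\gamma(t)\in\Gamma$ with the common generator $\psi_{0}(t,z)=\varphi_{0}(t,z)=\exp(\sum_{m,\alpha}t_{m\alpha}E_{\alpha}z^{m})$ of the two free modules. Then $\Psi_{g,\delta(l)}=\{u_{-}(g,\delta(l))\,\delta(l)\}\psi_{0}$ with $u_{-}(g,\delta(l))\in U_{-}\subset G_{<0}$, which is exactly the shape (\ref{omidm}); and $\Phi_{g,\delta(l)}=\{p_{+}(g,\delta(l))\,\delta(l)\}\varphi_{0}$ with $p_{+}(g,\delta(l))\in P_{+}\subset G_{\geqslant 0}$ (its zeroth Fourier coefficient lies in $\GL_{n}(\mathbb{C})\subset{\rm gl}_{n}(R)^{*}$), which is exactly (\ref{omzdm}); the same $\delta(l)$ occurs in both, as Proposition \ref{P4.1} demands, and the resulting pair is thus a potential solution of the required deformation type. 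The explicit formulas for that potential solution come, as recorded just after (\ref{omzdm}), from cancelling $k_{1}(z)\delta(l)$ in $U_{\alpha}k_{1}(z)\delta(l)=k_{1}(z)E_{\alpha}\delta(l)$ — legitimate because $[\delta(l),E_{\alpha}]=0$ on $\Delta(\mathfrak{t})$ — and similarly for $W_{\beta}$, giving $U_{\alpha}=u_{-}(g,\delta(l))E_{\alpha}u_{-}(g,\delta(l))^{-1}$ and $W_{\beta}=p_{+}(g,\delta(l))E_{\beta}z^{-1}p_{+}(g,\delta(l))^{-1}$. For the second point one invokes the standard fact that the factorisation map $\Omega\to U_{-}\times P_{+}$ over the big cell of a loop group is analytic (see \cite{PS86} and \cite{SW85}); precomposing it with the analytic map $\Gamma(g,\delta(l))\to\Omega$, $\gamma(t)\mapsto\delta(l)\gamma(t)g\gamma(t)^{-1}\delta(-l)$, shows every Fourier coefficient of $u_{-}(g,\delta(l))(t)$ and $p_{+}(g,\delta(l))(t)$ is an analytic function of $t$ on $\Gamma(g,\delta(l))$, hence lies in $R(g,\delta(l))$.

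Granting these verifications, Proposition \ref{P4.1} yields at once that $(\Psi_{g,\delta(l)},\Phi_{g,\delta(l)})$ is a set of wave matrices of the combined $({\rm sl}_{n}(\mathbb{C}),\mathfrak{t})$-hierarchy of type $\delta(l)$, and therefore that $(\{U_{\alpha}(g,\delta(l))\},\{W_{\beta}(g,\delta(l))\})$, in the explicit form above, solves the hierarchy. For the last assertion: $\delta=z\,\Id$ is central in $L_{an}\GL_{n}(\mathbb{C})$, so $\delta(l)\delta^{k}\,\gamma(t)g\gamma(t)^{-1}\,\delta(-l)\delta^{-k}=\delta(l)\gamma(t)g\gamma(t)^{-1}\delta(-l)$; hence $\Gamma(g,\delta(l)\delta^{k})=\Gamma(g,\delta(l))$, the factors $u_{-}$ and $p_{+}$ are unchanged, and so are $U_{\alpha}$ and $W_{\beta}$. (Equivalently, $\Psi_{g,\delta(l)\delta^{k}}=\Psi_{g,\delta(l)}\delta^{k}$ and $\Phi_{g,\delta(l)\delta^{k}}=\Phi_{g,\delta(l)}\delta^{k}$, and the central factor $\delta^{k}$ drops out of the conjugation formulas.) I expect the one genuinely non-formal ingredient — and hence the main obstacle — to be the analytic dependence of the factors $u_{-}(g,\delta(l))(t)$ and $p_{+}(g,\delta(l))(t)$ on $t$, which is precisely what guarantees that these loops actually have their matrix coefficients in $R(g,\delta(l))$ so that the algebraic machinery of Section \ref{linhier} applies; everything else is assembling already-established identities.
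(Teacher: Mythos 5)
Your proposal is correct and follows essentially the same route as the paper: Birkhoff factorization of $\delta(l)\gamma(t)g\gamma(t)^{-1}\delta(-l)$ over the big cell, recognition of $\Psi_{g,\delta(l)}$ and $\Phi_{g,\delta(l)}$ as oscillating matrices of type $\delta(l)$, the operators $M_{m\alpha}\in\pi_{\geqslant 0}({\rm gl}_{n}(R)[z,z^{-1}))$ and $N_{m\beta}\in\pi_{<0}({\rm gl}_{n}(R)[z^{-1},z))$ obtained by comparing the $u_{-}$- and $p_{+}$-expressions through $\Psi_{g,\delta(l)}=\Phi_{g,\delta(l)}g^{-1}$, and then an appeal to Proposition \ref{P4.1}, with the $\delta^{k}$-invariance coming from the centrality of $z\,\Id$ exactly as in the paper. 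The only differences are that you spell out justifications the paper leaves implicit (analyticity of the factorization map, hence coefficients in $R(g,\delta(l))$), which is harmless.
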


\begin{remark}
\label{R5.1}
If one is interested only in the solutions of the $({\rm sl}_{n}(\mathbb{C}), \mathfrak{t})$-hierarchy, one performs the construction in Theorem \ref{T5.1} with for $i<0$ all $t_{i \alpha}=0$ and the set $\{ U_{\alpha}\}$ is then a solution. Similarly, for solutions of the strict $({\rm sl}_{n}(\mathbb{C}), \mathfrak{t})$-hierarchy, take for all $i \geqslant 0, t_{i \alpha}=0$, and replace in the obtained $\{ W_{\beta}\}$, the loop parameter $z$ by $\frac{1}{z}$.
\end{remark}

\noindent
{\bf Conclusion}\\

\noindent
In this paper we considered three basic Lie algebras of $\mathfrak{t}$-loops, where $\mathfrak{t}$ is a maximal commutative complex Lie subalgebra of ${\rm sl}_{n}(\mathbb{C})$, but not necessarily a Cartan subalgebra. The generators of each of them are deformed in three different ways into ${\rm sl}_{n}$-loops depending of the commuting flows corresponding to these basic commutative Lie algebras. The first two deformations preserve commutativity, the third not, in general. We are interested in those deformations for which the evolution of the deformed generators is described by a specific set of Lax equations in each case. These three systems are shown to be compatible and can also be given in zero curvature form. This leads to the $({\rm sl}_{n}(\mathbb{C}), \mathfrak{t})$-hierarchy, its strict version and the combined $({\rm sl}_{n}(\mathbb{C}), \mathfrak{t})$-hierarchy, which can be seen as a merging of the first two. The combined $({\rm sl}_{n}(\mathbb{C}), \mathfrak{t})$-hierarchy is shown to have a linearization, which enables you to construct a wide class of solutions of this integrable hierarchy from a space of ${\rm sl}_{n}(\mathbb{C})$-loops.

\end{document}